\documentclass[lettersize, journal]{IEEEtran}
\usepackage{setspace}
\usepackage{cite}
\usepackage{pgfplots}

\pgfplotsset{compat=1.10}

\usepgfplotslibrary{fillbetween}

\usepackage{needspace}
\usepackage{balance}
\usepackage{bm}
\usepackage{lipsum}
\usepackage{makeidx}
\usepackage{enumerate}
\usepackage{color}
\usepackage{cite}
\usepackage{amsmath,amsthm}   
\usepackage{amssymb}
\usepackage{nomencl}
\usepackage{multirow}
\usepackage{graphicx}
\usepackage{epstopdf}
\usepackage{threeparttable}
\usepackage{multicol}
\usepackage{tikz}
\usetikzlibrary{matrix} 
\usetikzlibrary{positioning}
\DeclareGraphicsExtensions{.pdf,.jpeg,.png,.jpg,.emf,.eps}
\usepackage{algpseudocode}
\usepackage{algorithmicx}
\usepackage[ruled,noline,linesnumbered]{algorithm2e}
\usepackage{listings}
\usepackage[numbered,framed]{matlab-prettifier} 

\usepackage{calc}
\usepackage{here}
\usepackage{hyperref}
\usepackage{xurl}

\usepackage{upref}
\usepackage{comment}
\usepackage{times}
\usepackage{dsfont}
\usepackage{epic,eepic}
\usepackage{rawfonts}
\usepackage[T1]{fontenc}
\usepackage{latexsym}
\usepackage{amsfonts}

\usepackage{capitalgreekitalic}

\usepackage{xcolor}
\usepackage{tikz,pgfplots}
\usetikzlibrary{calc}
\usetikzlibrary{intersections}
\usetikzlibrary{arrows,shapes}
\usetikzlibrary{positioning}

\newtheorem{theorem}{Theorem}

\newtheorem{lemma}{Lemma}

\newtheorem{remark}{Remark}

\theoremstyle{definition}
\newtheorem{definition}{Definition}

\def\0{{\bf 0}}
\def\Y{{\bf Y}}
\def\X{{\bf X}}
\def\D{{\bf D}}
\def\R{{\bf R}}
\def\W{{\bf W}}

\def\Q{{\bf Q}}
\def\I{{\bf I}}
\def\A{{\bf A}}

\def\U{{\bf U}}
\def\V{{\bf V}}
\def\F{{\bf F}}
\def\T{{\bf T}}
\def\G{{\bf G}}
\def\H{{\bf H}}
\def\U{{\bf U}}
\def\B{{\bf B}}
\def\C{{\bf C}}

\def\x{{\bf x}}
\def\y{{\bf y}}

\def\b{{\bf b}}

\def\f{{\bf f}}
\def\g{{\bf g}}
\def\u{{\bf u}}

\def\Thetab{\bm{\Theta}}
\def\Sigmab{\bm{\Sigma}}
\def\Lambdab{\bm{\Lambda}}

\def\Gammab{\bm{\Gamma}}

\def\sigmab{\bm{\sigma}}
\def\tr{\operatorname{tr}}
\def\det{\operatorname{det}}
\def\diag{\operatorname{diag}}
\def\blkdiag{\operatorname{blkdiag}}
\def\Re{\operatorname{Re}}
\def\Im{\operatorname{Im}}
\def\rank{\operatorname{rank}}
\def\vec{\operatorname{vec}}

\newcommand{\CU}{\mathcal{U}}

\allowdisplaybreaks
\begin{document}

\title{{Optimal symmetric low-rank BD-RIS configuration maximizing the determinant of a MIMO link}

\author{Ignacio~Santamaria, \IEEEmembership{ Senior Member, IEEE}, Mohammad Soleymani, \IEEEmembership{Senior Member, IEEE}, Jes{\'u}s Guti{\'e}rrez, \IEEEmembership{Member, IEEE}, Eduard Jorswieck, \IEEEmembership{Fellow, IEEE}}
\thanks{I. Santamaria is with the Department
of Communications Engineering, Universidad de Cantabria, 39005 Santander, Spain (e-mail: i.santamaria@unican.es).}
\thanks{M. Soleymani is with the Signal and System Theory Group, Universit{\"a}t  Paderborn, 33098 Paderborn, Germany (e-mail: mohammad.soleymani@uni-paderborn.de).}
\thanks{J. Guti{\'e}rrez is with IHP - Leibniz-Institut
f{\"u}r Innovative Mikroelektronik, 15236 Frankfurt (Oder), Germany (email: teran@ihp-microelectronics.com).}
\thanks{Eduard Jorswieck is with Institute for Communications Technology, Technische Universit{\"a}t Braunschweig, 38106 Braunschweig,
Germany (email: e.jorswieck@tu-bs.de).}
}

\maketitle

\begin{abstract}

Beyond-diagonal reconfigurable intelligent surfaces (BD-RISs) significantly improve wireless performance by allowing tunable interconnections among elements, but their design in multiple-input multiple-output (MIMO) systems has so far relied on complex iterative algorithms or suboptimal approximations.  This work introduces a simple yet powerful approach: instead of directly maximizing the achievable rate, we maximize the absolute value of the determinant of the equivalent MIMO channel. We derive a closed-form symmetric unitary scattering matrix whose rank is exactly twice the channel’s degrees of freedom (\(2r\)). Remarkably, this low-rank solution achieves the same determinant value as the optimal unitary BD-RIS. Using log-majorization theory, we prove that the rate loss relative to the optimal unitary BD-RIS vanishes at high signal-to-noise ratio (SNR) or when the number of BD-RIS elements becomes large. Moreover, the proposed solution can be perfectly implemented using a $q$-stem BD-RIS architecture with only \(q = 2r-1\) stems, requiring a minimum number of reconfigurable circuits. The resulting Max-Det solution is orders of magnitude faster to compute than existing iterative methods while achieving near-optimal rates in practical scenarios. This makes high-performance BD-RIS deployment feasible even with large surfaces and limited computational resources.  

\end{abstract}

\begin{IEEEkeywords}
Beyond-diagonal reconfigurable intelligent surface, optimization, multiple antennas, majorization, rate maximization.
\end{IEEEkeywords}
\section{Introduction}
Beyond-Diagonal Reconfigurable Intelligent Surfaces (BD-RISs), originally proposed in \cite{ClerckxTWC22a}, \cite{ClerckxTWC22b}, are receiving a lot of attention recently as they provide improved control of the amplitudes and phases of the reflective elements, thus providing greater flexibility for channel shaping than diagonal RIS. A comprehensive and up-to-date review of the modeling, optimization, and application of BD-RIS can be found in \cite{ClerkxBDRISTut26}.

BD-RIS are usually modeled as passive and reciprocal devices. For a BD-RIS with $M$ elements, this means that the $M \times M$ scattering matrix, $\Thetab$, all of whose entries can be optimized, must satisfy $\Thetab \Thetab^H \preceq \I_M$, to be a passive device (with equality in the case of a lossless BD-RIS), and $\Thetab = \Thetab^T$ due to reciprocity. For single-stream transmissions, closed-form BD-RIS solutions have been found for certain optimization criteria. For example, \cite{SantamariaSPLetters2023} shows that the Takagi decomposition \cite{Takagi} \cite[Sec. 4.4]{HornBook} of the outer product of forward channel, from the transmitter (Tx) to BD-RIS, and backward channel, from BD-RIS to the receiver (Rx), provides a closed-form solution for the unitary and symmetric BD-RIS that maximizes the signal-to-noise ratio (SNR), and, consequently, the capacity, in a single-input single-output (SISO) channel. The result can be easily extended to single-stream transmissions over single-input multiple-output (SIMO) or multiple-input single-output (MISO) channels. Exploiting a different factorization, closed-form BD-RIS solutions in single-stream scenarios have also been described in \cite{NeriniTWC2023}.

However, considering multiple-input multiple-output (MIMO) systems involves a higher level of difficulty, such that, to date, there are no known unitary and symmetric closed-form solutions for BD-RIS that optimize some of the usual metrics, such as the rate or the Frobenius norm of the equivalent channel. Existing solutions in MIMO scenarios are either iterative, with varying degrees of computational complexity, or suboptimal, based on some relaxation of the BD-RIS constraints followed by a projection onto the set of unitary and symmetric matrices. Representative examples of such approaches have been reported in the literature.

Specifically, a non-iterative but suboptimal solution for maximizing the Frobenius norm of the equivalent MIMO channel has been proposed in \cite{MaoCL2024}. Furthermore, this suboptimal solution requires an unblocked direct link between Tx and Rx. Focusing on rate maximization, which is the metric of interest in this work, an iterative solution for BD-RIS-assisted MIMO links was proposed in \cite{SantamariaSPAWC24}. This solution, which exploits Takagi decomposition and applies manifold optimization (MO) techniques, provides a stationary point of the cost function but with a very high computational cost. More recently, a less complex MO algorithm that operates directly on the manifold of unitary and symmetric matrices has been proposed in \cite{santamariaICASSP26}. However, its complexity is still excessive for practical applications if the number of elements in the BD-RIS is large. Similar comments can be made about the generic BD-RIS optimization framework proposed in \cite{Zhou24}, which is based on the penalty dual decomposition methodology.

Beyond algorithmic considerations, the hardware design complexity of BD-RIS is another practical concern, which has attracted increasing attention. While fully-connected BD-RIS architectures can provide notable performance gains over conventional diagonal RIS, they require a large number of tunable circuit elements, resulting in high implementation complexity, which may affect the energy efficiency of BD-RIS \cite{SoleymaniSEE26,You21, SoleymaniSPAWC24}. To mitigate this issue, several reduced-complexity interconnection topologies have been proposed, including group-connected, tree-connected, and stem-connected designs \cite{ClerkxBDRISTut26}. These architectures substantially lower circuit complexity while preserving the key advantage of BD-RIS, namely, controllable inter-element coupling. In particular, stem-connected structures have been shown to match the performance of fully-connected implementations with significantly fewer hardware components in multi-user MIMO systems \cite{WuTIT25}. Such designs therefore offer an attractive performance-complexity trade-off, making BD-RIS more viable for practical deployment.

In this paper, we address both computational and implementation complexity challenges associated with BD-RIS. From a computational perspective, we consider the maximization of the absolute value of the equivalent MIMO channel as a proxy for rate and show that the resulting problem admits a closed-form solution. The proposed design is rate-optimal in the high-SNR regime when the direct Tx-Rx link is obstructed. Interestingly, the optimal scattering matrix is shown to have a rank exactly equal to twice the number of degrees of freedom (DoF) of the MIMO channel. From an implementation standpoint, this inherent low-rank structure enables efficient realization using a minimal number of impedance elements when the BD-RIS is implemented under the $q$-stem topology \cite{WuTIT25,ICC_QStem,Zhou2025Arxiv}. Furthermore, leveraging log-majorization analysis, we upper bound the rate loss of the proposed Max-Det solution and prove that it vanishes as either the SNR or the number of BD-RIS elements increases. Compared to manifold-optimization-based iterative designs, the proposed closed-form solution exhibits orders of magnitude lower computational complexity, making it a practical approach for rate maximization in BD-RIS-assisted MIMO systems.

The main contributions of this work can be summarized as follows:

\begin{itemize}

    \item For the first time, we consider the problem of maximizing the absolute value of the determinant of the equivalent MIMO channel assisted by a passive and symmetric BD-RIS, showing that maximizing the determinant is an asymptotically optimal proxy of rate as the SNR or the number of elements of the BD-RIS grows.

    \item We show that the problem of maximizing the determinant with a symmetric BD-RIS has a closed-form solution whose rank is twice the number of degrees of freedom (DoF) of the channel (i.e., $\rank = 2r$ with $r=\min(N_t,N_r)$). Compared to the optimal unitary BD-RIS, a unitary {\it and symmetric} solution simply doubles the rank of the solution. Both solutions yield equivalent MIMO channels, whose determinants have the same absolute value.

    \item An analysis based on log-majorization results is exploited to bound the rate gap between the Max-Det solution and the optimal unitary but not symmetric solution. This gap tends to zero either when the SNR or when the number of BD-RIS elements tends to infinity. Therefore, in these asymptotic regimes, the proposed solution maximizes the achievable rate.

    \item We explore the connections between the low-rank Max-Det solution and its implementation using a $q$-stem topology with a minimum number of reconfigurable impedances. In particular, the Max-Det solution can be optimally implemented with $q$-stem topology with just $q=2r-1$ stems, thus corroborating Corolary 2 in \cite{Zhao2024Arxiv}. 
 
\end{itemize}

The rest of the paper is structured as follows. In Sec. \ref{sec:Model}, we introduce the scenario of interest and propose maximizing the determinant of the equivalent MIMO channel as a reasonable proxy for rate, showing that it is asymptotically optimal in certain cases. Section \ref{sec:MaxDet} derives a closed-form solution for the Max-Det problem, showing that the optimal rank of the resulting scattering matrix is twice the DoF of the MIMO system. Some log-majorization results are also exploited in Sec. \ref{sec:MaxDet} to upper bound the rate gap between the symmetric BD-RIS that maximizes the determinant and the unitary BD-RIS that maximizes rate (and also the determinant). Sec. \ref{sec:qstem} explores the relationship between the low-rank structure of the scattering matrix and the sparsity of the admittance matrix. Specifically, we show that the low-rank Max-Det solution can be implemented using the recently proposed $q$-stem structure, which requires a minimum number of reconfigurable circuits. Section \ref{sec:simulations} presents simulation results that demonstrate, in practical scenarios, the minor differences between the Max-Det solution and iterative algorithms that maximize the rate, while highlighting a significant reduction in computational cost. Finally, Sec. \ref{sec:conclusions} summarizes the main conclusions and also points out future lines of work.

\indent \textit{Notation}: The symbols for scalars, vectors, matrices, and sets are, respectively, $x$, $\x$, $\X$, and $\mathcal{X}$. $\A^T$, $\A^*$, $\A^H$, $\A^{-1}$, $\A^{1/2}$, $\det(\A)$, $\tr(\A)$ are, respectively, transpose, conjugate, Hermitian, inverse, square root, determinant and trace of matrix $\A$. $\I_n$ denotes the identity matrix of size $n$ and ${\bf 0}_{a \times b}$ denotes a zero matrix of dimensions $a \times b$. When there is no confusion with the dimensions, we will omit the subscripts. ${\cal CN}({\bf 0}, {\bf R})$ is the proper complex Gaussian distribution with zero mean and covariance matrix ${\bf R}$. $\mathbb{S}_t(s,\mathbb{C}^M)$ denotes the Stiefel manifold of $s$-dimensional frames in a complex space of dimension $M$, i.e.,
 $\mathbb{S}_t(s,\mathbb{C}^M)=\{\Q \in \mathbb{C}^{M\times s}: \Q^H\Q=\I_s\}$. On the other hand, $\CU(n)=\{\U \in \mathbb{C}^{n\times n}: \U\U^H=\I_n\}$ denotes the unitary manifold. We also consider the manifold of unitary and symmetric matrices denoted as  $ \CU_s(n)= \{\U \in \CU(n): \U=\U^T\}$. In our notation, $\log(x)$ denotes the logarithm to the base 2: $\Re(a)$, $\Im(a)$ and $\angle a$ are, respectively, the real part, imaginary part, and the angle of the complex number $a$. Finally, $\sigma_i(\A)$ (sometimes we will use $\sigma_{a_i}$) denotes the $i$-th singular value of the complex matrix $\A$.

\section{System model and problem formulation}
\label{sec:Model}
\subsection{System model}
We consider a BD-RIS-assisted MIMO channel in which the direct channel is blocked. The Tx has $N_t$ antennas and the Rx has $N_r$ antennas. The equivalent channel is
\begin{equation}
\H(\Thetab) = \F \Thetab \G^H
\label{eq:MIMOchanneleq}
\end{equation}
where, $\G \in \mathbb{C}^{N_{t} \times M}$ is the channel from the Tx to the BD-RIS, $\F \in \mathbb{C}^{N_{r} \times M}$ is the channel from the BD-RIS to the Rx, and $\bm{\Theta}$ is the $M \times M $ BD-RIS matrix. In this paper, we denote the degrees of freedom (DoF) of the MIMO channel as $r = \min(N_r,N_t)$, and assume that $M \geq 2 r$ (which is reasonable because $M \gg r $ usually). Although the results of this work are valid for systems with arbitrary $(N_t, N_r)$, we will pay particular attention to the symmetric MIMO case, where $N_t = N_r = r$, due to its simplicity. For the symmetric scenario, we introduce the following notation for the Singular Value Decomposition (SVD) of $\F$ and $\G$, which will be useful throughout the paper
\begin{align}
\F &= \underbrace{\begin{bmatrix} \U_{F_1} & \U_{F_2}  \end{bmatrix}}_{\U_F} \begin{bmatrix} \Sigmab_F & {\bf 0}_{r \times (M-r)} \end{bmatrix} \underbrace{\begin{bmatrix} \V_{F_1}^H \vspace{0.1cm} \\ \V_{F_2}^H \end{bmatrix}}_{\V_F^H}  \label{eq:SVDF}\\
\G &= \underbrace{\begin{bmatrix} \U_{G_1} & \U_{G_2}  \end{bmatrix}}_{\U_G}  \begin{bmatrix} \Sigmab_G & {\bf 0}_{r \times (M-r)} \end{bmatrix} \underbrace{\begin{bmatrix} \V_{G_1}^H \vspace{0.1cm} \\ \V_{G_2}^H \end{bmatrix}}_{\V_G^H},  \label{eq:SVDG}
\end{align}
where $\U_{F_1} \in \CU(r)$ and $\U_{G_1} \in \CU(r)$ are $r \times r$ unitary matrices, $\Sigmab_F = \diag(\sigma_{f_1},\ldots, \sigma_{f_r})$ and $\Sigmab_G = \diag(\sigma_{g_1},\ldots, \sigma_{g_r})$ are diagonal matrices with the positive singular values, and, finally, $\V_{F_1} \in \mathbb{S}_t(r,\mathbb{C}^M)$ and $\V_{G_1} \in \mathbb{S}_t(r,\mathbb{C}^M)$ are Stiefel matrices. The dimensions of the rest of matrices are obvious from \eqref{eq:SVDF} and \eqref{eq:SVDG}. Notice that $\F = \U_{F_1} \Sigmab_F  \V_{F_1}^H$ and $\G = \U_{G_1} \Sigmab_G \V_{G_1}^H$ are {\it compact} SVD's for $\F$ and $\G$ (discarding the zero singular values), which will be useful in deriving the solution. In addition, any vector in $\V_{F_2}$ (resp.$\V_{G_2})$ is orthogonal to any vector in $\V_{F_1}$ (resp. $\V_{G_1})$. 

We are interested in passive and symmetric BD-RIS (due to reciprocity) that are characterized by scattering matrices that belong to the feasibility set ${{\cal T}} = \{\Thetab \in \mathbb{C}^{M\times M}, \Thetab \Thetab^H \preceq \I_M, \Thetab = \Thetab^T \}$ 
 \cite{ClerckxTWC22b}. In this work, we address the problem of maximizing the rate of the MIMO channel assisted by a BD-RIS in ${{\cal T}}$ and derive the optimal solution at high-SNR. In particular, the optimal $\Thetab$ has rank $2r$ and can be factored as $\Thetab = \Q \Q^T$, with $\Q \in \mathbb{S}_t(2r,\mathbb{C}^M)$. The fact that the optimal BD-RIS is rank-deficient if $M>2r$ has important implications. The recently proposed $q$-stem topology for BD-RIS implementation \cite{ICC_QStem},\cite{WuTIT25},\cite{Zhou2025Arxiv} has shown that the number of impedances required to interconnect the elements decreases as the rank of $\Thetab$ decreases. A scattering matrix of minimum rank can, therefore, be implemented with the minimum possible number of impedances.
 
\subsection{Rate maximization and its determinant proxy}

The Tx sends proper Gaussian signals with a fixed isotropic covariance matrix, $\x \sim \mathcal{CN}({\bf 0}, \R_{xx})$, with $ \R_{xx} = (P/N_t) \, \I_{N_t}$. The motivation for assuming uniform power allocation is twofold. First, although the BD-RIS design assumes perfect channel state information (CSI), this information may only be available at the BD-RIS controller or in an edge processor or central unit (CU), but not at the Tx. Second, we are mainly interested in high-SNR scenarios where uniform power distribution is optimal. Nevertheless, note that $ \R_{xx} $ can always be absorbed in the channel when computing rate expressions and therefore the results in the paper are valid for any fixed covariance matrix. The received signal is contaminated by additive white Gaussian noise, ${\bf n} \sim \mathcal{CN}({\bf 0}, \sigma^2\, \I_{N_r})$. We are interested in the solution of the following problem\footnote{The expression assumes $r = N_r \leq N_t$. If $r = N_t < N_r$, $\H(\Thetab)\H(\Thetab)^H $ must be replaced by $\H(\Thetab)^H\H(\Thetab)$.}:
\begin{subequations}
\begin{align}
({\cal P}_1): \, \min_{0 < s \leq M} \,\max_{\Thetab}\,\,& \log \det \left( \I_{N_r} + \rho \, \H(\Thetab)\H(\Thetab)^H \right) \label{eq:Cap}  \\
\,\,& \Thetab = \Q \Q^T,  \label{eq:Theta1} \\
\,\,&  \Q \in \mathbb{S}_t(s,\mathbb{C}^M)\label{eq:Theta2}, \\
\,\,&  0 < s \leq M. \label{eq:sconst}
\end{align}
\end{subequations}
where $\rho = \frac{P}{N_t\sigma^2}$ is the per-antenna SNR, $\H(\Thetab)$ is given by \eqref{eq:MIMOchanneleq}, and $s$ is the rank of the scattering matrix, which is also an optimization variable. In particular, we are interested in the minimum-rank solution for $\Thetab$ that maximizes the rate.

\begin{remark}{\bf [Unitary solution].}
\label{remark:unitary}
We note that the difficulty of $({\cal P}_1)$ stems from the symmetry constraint $\Thetab = \Thetab^T$. Without this constraint, it is well-known that the full-rank unitary solution to the problem aligns the signal subspaces of $\F$ and $\G$ as $\Thetab = \V_F \V_G^H$, which is the solution discussed in \cite{EmilEuCap2025}, while $\Thetab = \V_{F_1}\V_{G_1}^H$ is a minimum rank solution (of rank $s = r = \min(N_t, N_r)$) that achieves the same rate. 
\end{remark}

Let $\H(\Thetab)$ be the $r \times r$ equivalent channel with singular values $\sigma_{h_1} \geq \ldots \geq \sigma_{h_r} \geq 0$, then the rate can be expressed as \cite{LetaiefTCOM06} 
\begin{multline}    
    \log \det \left( \I_{N_r} + \rho \, \H(\Thetab)\H(\Thetab)^H \right) = r \log(\rho) 
    \\+ \log \det \left(\H(\Thetab)\H(\Thetab)^H \right) + \sum_{i=1}^r \log \left( 1 + \frac{1}{\rho \sigma_{h_i}^2}\right).
    \label{eq:capapp}
\end{multline}
This expression shows that when $\rho \to \infty$ (assuming that the smallest singular value is bounded away from zero, i.e., $\sigma_{h_r} >0$), the last term of \eqref{eq:capapp} vanishes. Consequently, the solution that maximizes the SNR-independent term, $\log \det \left(\H(\Thetab)\H(\Thetab)^H \right)$, converges to the solution that maximizes the rate as the SNR increases. This fact motivates our approach to use determinant maximization as a proxy for rate maximization.  Since $\log(1+x) \leq x/\ln(2)$, the approximation error in \eqref{eq:capapp} can be bounded as
\begin{equation}
    \sum_{i=1}^r \log \left( 1 + \frac{1}{\rho \sigma_{h_i}^2}\right) \leq \frac{1}{\rho \ln 2} \sum_{i=1}^r \frac{1}{\sigma_{h_i}^2} \leq \frac{r}{\rho \, \sigma_{h_r}^2  \ln 2 },
    \label{eq:errorbound}
\end{equation}
suggesting that the error term is small unless $\rho \sigma_{h_r}^2$ is very small. In Sec. \ref{sec:simulations}, we study the impact of using the determinant as a proxy for rate maximization.

In passing, we mention that the use of determinant maximization as a proxy for rate or capacity maximization has been studied in the literature, particularly in the context of antenna subset selection \cite{LetaiefTCOM06}, \cite{GorokhovTSP03}. However, in the context of RIS, it has not been given as much attention. Only \cite{ChoiCL21} considers maximizing the determinant in a line-of-sight (LoS) multi-RIS scenario with diagonal RISs. However, it ultimately exploits the Arithmetic Mean-Geometric Mean (AM-GM) inequality to maximize the Frobenius norm of the equivalent channel instead of its determinant.
\needspace{5\baselineskip}
\begin{remark}{\bf [The max-det proxy is rate optimal when $M \to \infty$].}
\label{remark:Mgrows}
Let us assume\footnote{In RIS-assisted scenarios, $\F$ and/or $\G$ usually have a dominant path, which translates into a deterministic line-of-sight (LoS) component \cite{SemmlerSPAWC}, \cite{SantamariaCL25}, \cite{ZhangRankOneWCL2021}. In this situation, the bulk of the singular values are governed by the random component and concentrate as described in the lemma, while the LoS component acts as a deterministic perturbation that can create several spikes in the distribution of singular values \cite{CouilletBook}, \cite{Couillet11}. Under appropriate normalization, the lemma's argument that the minimum singular value grows with $M$ still holds for Ricean channels.} that the entries of \(\F\) and \(\G\) are i.i.d. complex Gaussian with mean 0 and variance 1. We also assume that \(N_r = N_t = r\) is fixed while \(M \to \infty\), and that \(\Thetab = \Q\Q^T\) is chosen optimally to maximize \(|\det(\F \Thetab \G^H)|\).

Under these assumptions, the singular values \(\sigma_i(\F)\) and \(\sigma_i(\G)\) (for \(i = 1, \dots, r\)) all concentrate around \(\sqrt{M}\) by the Marchenko-Pastur law \cite{Marchenko}, \cite{Tulino} as \(M \to \infty\). More precisely, \(\sigma_{\min}(\F) \sim \sqrt{M} - \sqrt{r} \) and \(\sigma_{\max}(\F) \sim \sqrt{M} + \sqrt{r}\).

Let \(\H(\Thetab) = \F \Thetab \G^H\), with \(\Thetab = \Q\Q^T \preceq \I_M\). Then, by the multiplicative property of singular values \cite{HornBook},
\begin{multline*}
\sigma_{\min}(\H(\Thetab)) \geq \sigma_{\min}(\F) \cdot \sigma_{\min}(\Q \Q^T) \cdot \sigma_{\min}(\G)
\\
\sim (\sqrt{M}) \cdot 1 \cdot (\sqrt{M}) = M \to \infty.    
\end{multline*}
In words, the minimum singular value of the equivalent channel grows unbounded as the number of BD-RIS elements tends to infinity. Therefore, the error term in \eqref{eq:errorbound} vanishes as either $\rho \to \infty$ or $M \to \infty$. In these regimes, the solution that maximizes the determinant also maximizes the rate. Similarly, the minimum singular value of 
\(\H(\Thetab) \H(\Thetab)^H\) grows with $M$ as \(\sigma_{\min}(\H(\Thetab) \H(\Thetab)^H) \sim M^2 \to \infty\).

\end{remark}

Finally, we consider the following Max-Det problem
\begin{subequations}
\begin{align}
({\cal P}_2): \,\min_{0 < s \leq M} \,\max_{\Thetab}\,\,& \log \det \left(\H(\Thetab)\H(\Thetab)^H \right) \label{eq:Det}  \\
\,\,& \Thetab = \Q \Q^T,  \label{eq:Theta1bis} \\
\,\,&  \Q \in \mathbb{S}_t(s,\mathbb{C}^M), \label{eq:Theta2bis}\\
\,\,&  0 < s \leq M. \label{eq:sconstbis}
\end{align}
\end{subequations}
In $({\cal P}_2)$, $s$ is the rank of the solution. In particular, we are interested in the smallest rank solution. 

\section{Max-Det solution and its rate gap}
\label{sec:MaxDet}
In this section, we present the solution to $({\cal P}_2)$ and analyze its performance.
\subsection{Main result}
\label{sec:MainResult}
\begin{theorem}
\label{th:theoremU}
The optimal solution to $({\cal P}_2)$ is
\begin{eqnarray}
    \Thetab_{opt} &=& \U \begin{bmatrix}\I_r & \0 \\ \0 & -\I_r \end{bmatrix} \U^T \nonumber
    \\
    &=& \underbrace{\U \begin{bmatrix}\I_r & \0 \\ \0 & -j\I_r \end{bmatrix}}_{\Q} \underbrace{\begin{bmatrix}\I_r & \0 \\ \0 & -j\I_r \end{bmatrix} \U^T}_{\Q^T},
    \label{eq:Thetaopt}
\end{eqnarray}
where $r = \min(N_t,N_r)$ (and hence $s = 2r$ in ${\cal P}_2$) and $\U$ is the left eigenspace obtained from the compact SVD (involving only the positive singular values) of the $M \times 2r$ matrix $\A = \left [\V_{F_1} , \V_{G_1}^* \right ] = \U \Lambdab \V^H$, where ${\bf V}_{F_1}$ are the $r$ main right singular vectors of $\F$, (see \eqref{eq:SVDF}) and ${\bf V}_{G_1}^*$ are the complex conjugate of the $r$ main right singular vectors of $\G$ (see \eqref{eq:SVDG}).
\end{theorem}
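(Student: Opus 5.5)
The proof has three steps. First, bound $|\det(\H(\Thetab))|$ uniformly over the feasible set. Second, show that the proposed $\Thetab_{opt}$ attains the bound. Third, show that no feasible matrix of rank below $2r$ can attain it. I work in the symmetric case $N_t=N_r=r$; the general case reduces to it through the compact SVDs. I also assume that $\A=[\V_{F_1},\V_{G_1}^*]$ has full column rank $2r$, which holds generically and is consistent with $M\ge 2r$.

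\emph{Upper bound.} Using the compact SVDs,
\begin{equation*}
\H(\Thetab)=\U_{F_1}\Sigmab_F\,\big(\V_{F_1}^H\Thetab\V_{G_1}\big)\,\Sigmab_G\U_{G_1}^H ,
\end{equation*}
so that
\begin{equation*}
|\det\H(\Thetab)|=\det(\Sigmab_F)\det(\Sigmab_G)\,|\det(\V_{F_1}^H\Thetab\V_{G_1})| .
\end{equation*}
Since $\Thetab\Thetab^H\preceq\I_M$ and $\V_{F_1},\V_{G_1}$ are Stiefel matrices, the $r\times r$ matrix $\V_{F_1}^H\Thetab\V_{G_1}$ is a contraction. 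All its singular values are therefore at most one, and $|\det(\V_{F_1}^H\Thetab\V_{G_1})|\le 1$. This gives $|\det\H(\Thetab)|\le\prod_i\sigma_{f_i}\sigma_{g_i}$ over the whole feasible set ${\cal T}$, and in particular for every $\Thetab=\Q\Q^T$. Remark~\ref{remark:unitary} shows that the non-symmetric solution $\V_{F_1}\V_{G_1}^H$ attains this value.

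\emph{Achievability.} I will compute the compact SVD of $\A$ explicitly through its Gram matrix. Write $\C=\V_{F_1}^H\V_{G_1}^*$ and take its SVD $\C=\bm{P}\,\diag(c_i)\,\bm{R}^H$, where the $c_i\in[0,1)$ are the cosines of the principal angles. Then
\begin{equation*}
\A^H\A=\begin{bmatrix}\I_r&\C\\ \C^H&\I_r\end{bmatrix},
\end{equation*}
and a direct check gives its eigen-decomposition. The right singular vectors are $\V=\tfrac{1}{\sqrt2}\begin{bmatrix}\bm{P}&\bm{P}\\ \bm{R}&-\bm{R}\end{bmatrix}$ and the singular values are $\Lambdab=\diag(\sqrt{1+c_i},\sqrt{1-c_i})$.

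With this ordering, $\Thetab_{opt}=\U\bm{D}\U^T$ with $\bm{D}=\diag(\I_r,-\I_r)$. Let $\X=\U^H\V_{F_1}$ and $\Y=\U^H\V_{G_1}^*$, so that $[\X,\Y]=\Lambdab\V^H$. Since $\V_{G_1}^*$ lies in the range of $\U$, we have $\U^T\V_{G_1}=\Y^*$. Hence
\begin{equation*}
\V_{F_1}^H\Thetab_{opt}\V_{G_1}=\X^H\bm{D}\Y^*=\tfrac12\,\bm{P}\big[(\I+\diag(c_i))+(\I-\diag(c_i))\big]\bm{R}^T=\bm{P}\bm{R}^T ,
\end{equation*}
which is unitary. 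The bound is therefore attained.

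Feasibility is immediate. $\Thetab_{opt}=\Q\Q^T$ with $\Q=\U\diag(\I_r,-j\I_r)\in\mathbb{S}_t(2r,\mathbb{C}^M)$, so $\Thetab_{opt}$ is symmetric, has rank $2r$, and satisfies $\Thetab_{opt}\Thetab_{opt}^H=\U\U^H\preceq\I_M$.

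\emph{Minimality of the rank $2r$.} Suppose a feasible $\Thetab$ attains the bound, so the contraction $\V_{F_1}^H\Thetab\V_{G_1}$ is unitary. Since $\|\Thetab\V_{G_1}\bm{e}\|\le 1$ for every unit vector $\bm{e}$ and its projection onto the columns of $\V_{F_1}$ already has norm one, we must have $\Thetab\V_{G_1}=\V_{F_1}\W$ for some unitary $\W$. Transposing and using $\Thetab=\Thetab^T$ gives $\V_{G_1}^T\Thetab=\W^T\V_{F_1}^T$, i.e.\ $\Thetab^H\V_{G_1}^*=\V_{F_1}^*\W^*$. Multiplying by $\Thetab$ and using that $\Thetab$ acts isometrically on the relevant subspace, this yields $\Thetab\V_{F_1}^*\propto\V_{G_1}^*$ up to a unitary factor. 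Consequently both $\V_{F_1}$ and $\V_{G_1}^*$ lie in the range of $\Thetab$, so $\rank(\Thetab)\ge\rank(\A)=2r$. Thus $s=2r$ is the smallest rank attaining the maximum.

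I expect the main obstacle to be this isometry argument: turning equality in the contraction bound into the column-space inclusions rigorously. I would handle it with the standard equality case of $\|\Thetab\bm{x}\|\le\|\bm{x}\|$ combined with the symmetry of $\Thetab$. A second technical point is the degenerate case where some $c_i=1$, so that $\A$ is rank deficient. There I expect the same construction to work after taking a limit or padding $\U$ with extra orthonormal columns, but this needs checking.
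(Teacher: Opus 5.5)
Your proof is correct. Its core, the eigendecomposition of $\A^H\A$ with the structured $\V$ and the evaluation $\V_{F_1}^H\Thetab_{opt}\V_{G_1}=\bm{P}\bm{R}^T$, is the same computation as claim ii) in the paper's appendix. Your $\X$ and $\Y^*$ are the paper's coordinate matrices $\C_f=\U^H\V_{F_1}$ and $\C_g=\U^T\V_{G_1}$. The paper evaluates $\C_f^H\D\C_g$ column by column through $\u_{1,k},\u_{2,k}$, obtaining the coefficients $\cos^2(\theta_k/2)+\sin^2(\theta_k/2)=1$.

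The differences lie in what surrounds that computation.
\begin{itemize}
\item Because you factor $\H(\Thetab)$ through the compact SVDs, only the block $\V_{F_1}^H\Thetab\V_{G_1}$ enters the determinant. The paper's claim i), block-diagonality of $\V_F^H\Thetab\V_G$, is therefore unnecessary, and you rightly skip it.
\item You prove the upper bound by a contraction argument; the appendix takes it for granted.
\item You prove that no feasible matrix of rank below $2r$ reaches ${\rm D}_{\rm max}$. The theorem asserts $s=2r$, but the paper never proves this. Your argument even covers every symmetric contraction in ${\cal T}$, not only $\Q\Q^T$. The paper makes the same full-column-rank assumption on $\A$ that you do.
\end{itemize}

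The step you worried about closes more simply than through an isometry argument. By symmetry, $(\V_{F_1}^H\Thetab\V_{G_1})^T=\V_{G_1}^T\Thetab\V_{F_1}^*$ is also unitary. The same equality case, applied to the Stiefel pair $(\V_{G_1}^*,\V_{F_1}^*)$, gives $\Thetab\V_{F_1}^*=\V_{G_1}^*\W^T$ directly. Together with $\Thetab\V_{G_1}=\V_{F_1}\W$, this places $\operatorname{range}(\A)$ inside $\operatorname{range}(\Thetab)$. A minor point: $\U^T\V_{G_1}=(\U^H\V_{G_1}^*)^*$ holds by conjugation alone, so no range argument is needed there.

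One caveat, shared with the paper, should be made explicit. Because $\Thetab_{opt}=\U\D\U^T$ uses a transpose, the conclusion holds only for the specific compact SVD you build. That is $\U=\A\V\Lambdab^{-1}$, with columns $k$ and $r+k$ of $\V$ generated by the same pair $(\bm{p}_k,\bm{r}_k)$. It does not hold for an arbitrary compact SVD.
\begin{itemize}
\item Re-phasing $\u_{2,k}$ by $e^{j\psi}$, which a numerical SVD routine is free to do, turns the $k$-th coefficient into $\cos^2(\theta_k/2)+e^{2j\psi}\sin^2(\theta_k/2)$. Its modulus is strictly below one unless $e^{2j\psi}=1$.
\item Take $M=2$, $r=1$, $\V_{F_1}=\bm{e}_1$, $\V_{G_1}^*=\bm{e}_2$. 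Then $\A=\I_2$, and $\U=\I_2$ is a valid compact SVD. It gives $\Thetab=\diag(1,-1)$ and $\H(\Thetab)=\0$.
\end{itemize}
So ``with this ordering'' should read ``with this choice of singular vectors.'' That choice must fix the phases, and the rotations within repeated singular values, not just the order.
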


\begin{proof}
See Appendix A.
\end{proof}

Theorem \ref{th:theoremU} provides a closed-form solution for the optimal BD-RIS that maximizes $\det \left(\H(\Thetab)\H(\Thetab)^H \right)$, which coincides with the maximum rate solution as the SNR or the number of BD-RIS elements grows. A pseudocode to obtain the optimal Max-Det BD-RIS is shown\footnote{Matlab code can be downloaded from \url{https://github.com/IgnacioSantamaria/Code-MaxDet}.} in Algorithm \ref{alg:snippet}.

\begin{algorithm}[!t]
\small
\DontPrintSemicolon
\SetAlgoVlined
\KwIn{$\F$, $\G$, and $r = \min(N_t,N_r)$}
\KwOut{$\Q$ and $\Thetab = \Q\Q^T$}
Calculate the compact SVDs of $\F$ and $\G$: $\F = \U_{F_1} \Sigmab_F  \V_{F_1}^H$ and $\G = \U_{G_1} \Sigmab_G \V_{G_1}^H$ \;
Form $\A = \left [\V_{F_1} , \V_{G_1}^* \right ]$ \;
Calculate the compact SVD of $A$: $\A = \U \Lambdab \V^H$\;
Max-Det Solution: $\Q = \U \blkdiag(\I_r, j \I_r)$ and $\Thetab = \Q\Q^T$\;
\caption{Optimal symmetric Max-Det BD-RIS}
\label{alg:snippet}
\end{algorithm}

\begin{remark}{\bf [Computational complexity].}
Algorithm \ref{alg:snippet} involves two compact SVDs of matrices $\F$ and $\G$, with a computational cost of $\mathcal{O}(M r^2)$ each, and a compact SVD of matrix $\A$ of dimensions $M \times 2r$, with a computational cost of $\mathcal{O}(M (2r)^2)$. Therefore, the complexity is $\mathcal{O}(6 M r^2)$, which scales linearly with the number of RIS elements and quadratically with the number of DoF of the MIMO channel.
\end{remark}

\subsection{Rate penalty of the Max-Det solution}
\label{sec:analysis}
In this subsection, we bound the rate gap between the Max-Det solution and the symmetric, but non-unitary, BD-RIS that maximizes the rate (cf. Remark \ref{remark:unitary}). For a symmetric MIMO channel $\det \left(\H(\Thetab)\H(\Thetab)^H \right) = |\det(\H(\Thetab))|^2$. First, it should be noted that the maximum achievable value for $|\det(\H(\Thetab))|$ is 
\begin{equation}
{\rm D}_{\rm max} = \det(\Sigmab_F)\det(\Sigmab_G) = \prod_{i=1}^r  \sigma_{f_{[i]}} \sigma_{g_{[i]}},
\label{eq:MaxDet}
\end{equation}
where $\Sigmab_F$ and $\Sigmab_G$ contain the non-zero singular values of $\F$ and $\G$ sorted in decreasing order as $\sigma_{f_{[1]}} \geq  ... \geq  \sigma_{f_{[r]}}$ and $\sigma_{g_{[1]}}  \geq ... \geq  \sigma_{g_{[r]}}$, which we shall also denote as $\sigmab_F^{\downarrow}$ and $\sigmab_G^{\downarrow}$. The solution of Algorithm \ref{alg:snippet} generates an equivalent channel whose determinant is ${\rm D}_{\rm max}$, and the rank-$r$ unitary\footnote{Strictly speaking, a rank-$r$ matrix is not unitary. We use this term loosely, understanding it to be a matrix with orthogonal columns and with $r$ singular values equal to 1 and the remaining singular values equal to 0. Remember also that it is possible to complete the subspaces $\V_{F_1}$ and $\V_{G_1}$ with their orthogonal complements to form a unitary matrix without modifying the solution.}, but not symmetric, solution $\Thetab = \V_{F_1}\V_{G_1}^H$ also attains ${\rm D}_{\rm max}$. The last point can be easily verified by substituting the compact SVD's of $\F$ and $\G$ into the MIMO equivalent channel, i.e.,
\begin{align}
\nonumber    
    \H(\Thetab) &= \F \Thetab \G^H = \U_F \Sigmab_F \V_{F_1}^H \underbrace{\V_{F_1}\V_{G_1}^H}_{\Thetab} \V_{G_1} \Sigmab_G \U_G^H
    \\ &= \U_F \Sigmab_F  \Sigmab_G \U_G^H. \label{eq:Htheta}
\end{align}
Although we assume a symmetric MIMO system for simplicity, the results remain valid when $N_r \neq N_t$. If, for example, $N_r>N_t=r$, it suffices to select the first $r$ columns of $\U_{F_1}$  and the $r \times r$ submatrix of $\Sigmab_F$. Now, the absolute value of the determinant of $\H(\Thetab)$ is
\begin{align*}
    | \det(\H(\Thetab)) | &= | \det(\U_F) \det(\U_G^H) | \det(\Sigmab_F \Sigmab_G) 
    \\
    &= \prod_{i=1}^r  \sigma_{f_{[i]}} \sigma_{g_{[i]}} = {\rm D}_{\rm max},
\end{align*}
where, in the last equality, we have used the property that if $\U \in \CU(r)$ is a unitary matrix, then $\U^H\U = \I_r$, implying that $|\lambda_i| =1$, $i=1,\ldots,r$; where $\lambda_i$ denotes the $i$-th eigenvalue of $\U$. Therefore, $|\det(\U_F)| = |\det(\U_G^H)| =1$.
 
Certainly, many other possible solutions reach ${\rm D}_{\rm max}$ even though the singular values of the corresponding equivalent channels may be different. It is possible to arbitrarily rotate the subspaces $\V_{F_1}$ and $\V_{G_1}$ to build a new rank-$r$ unitary matrix, $\Thetab' = \V_{F_1} \U \V_{G_1}^H$ with $\U \in \CU(r)$, such that $|\det(\H(\Thetab'))| =  |\det(\H(\Thetab))| = {\rm D}_{\rm max}$. This can be easily verified by repeating the steps in \eqref{eq:Htheta}, but now introducing another unitary matrix, $\U$, sandwiched between $\V_{F_1}$ and $\V_{G_1}$. Since $|\det(\U)|=1$, the determinant of the equivalent channel matrix does not change, but its singular values do. By varying $\U \in \CU(r)$ in $\Thetab' = \V_{F_1} \U \V_{G_1}^H$, it is possible to explore all equivalent channels with the same maximum determinant. Looking at the problem from this perspective, we are faced with the following question: of all the equivalent channels generated with BD-RISs satisfying $\Thetab \Thetab^H  \preceq \I_{M}$, which maximizes rate and why? As we will see, the answer to this question allows us to bound the rate gap between the optimal symmetric and unitary solutions. Before presenting the results, we define the following concepts. 

\begin{definition}[{\bf Log-majorization} \cite{Palomar2011}]
    For two vectors \(\x, \y \in \mathbb{R}^r_+\) with components sorted in decreasing order $\x^{\downarrow} = (x_{[1]} \geq \ldots \geq x_{[r]})$ and similarly for \(\y\), \(\x\) is log-majorized by \(\y\) (denoted \(\x \prec_{\log} \y\)) if: 
\[
\prod_{i=1}^k x_{[i]} \leq \prod_{i=1}^k y_{[i]} \quad \text{for } 1 \leq k < r,
\]
and
\[
\prod_{i=1}^r x_{[i]} = \prod_{i=1}^r y_{[i]}.
\]
This is equivalent to the standard (additive) majorization (see \cite{JorswieckMonograph}, \cite{OlkinBook}) of the component-wise logarithms: \(\log \x \prec \log \y\), where \(\log\) is applied element-wise. 
\end{definition}

\begin{definition}[{\bf Schur multiplicatively convex functions} \cite{CHU2012412}]
Let $E \subseteq \mathbb{R}^r_+$ be a set. A real-valued function $f: E \mapsto \mathbb{R}_+$ is said to be Schur multiplicatively convex on $E$ if $f(\x) \leq f(\y)$ for each pair of $r$-tuples in $E$ such that $\x \prec_{\log} \y$. $f$ is said to be Schur multiplicatively concave if $1/f$ is Schur multiplicatively convex, and, hence, $f(\x) \geq f(\y)$. 
\end{definition}

We now present the following result:
\begin{lemma}
\label{lemma:logmaj}
Let $\H(\Thetab) = \F\Thetab \G^H$ and $\H(\Thetab') = \F \Thetab' \G^H$ be two MIMO channels generated with two different BD-RIS such that $|\det(\H(\Thetab))| = |\det(\H(\Thetab'))|$. The singular values in decreasing order of $\H(\Thetab)$ and $\H(\Thetab')$ are $\sigmab^{\downarrow}$ and $\sigmab'^{\downarrow}$, respectively. Then, if $\sigmab'^{\downarrow} \prec_{\log} \sigmab^{\downarrow}$, 
\[\log\det(\I_r + \rho \H(\Thetab) \H(\Thetab)^H) \geq \log\det(\I_r + \rho \H(\Thetab') \H(\Thetab')^H).
\]
\end{lemma}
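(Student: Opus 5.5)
The plan is to pass to logarithmic coordinates, where log-majorization becomes ordinary majorization, and then apply the standard Schur-concavity argument to a sum of concave functions.

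\textbf{Step 1: change of variables.} The rate depends only on the singular values:
\[
\log\det(\I_r + \rho \H(\Thetab)\H(\Thetab)^H) = \sum_{i=1}^r \log(1+\rho\sigma_i^2).
\]
Put $x_i = \log \sigma_i$, which is well defined when all singular values are positive. If some $\sigma_i=0$, equal determinants force both products to vanish and I will treat this degenerate case by a limiting/continuity argument. Define $\phi(x) = \log(1+\rho e^{2x})$. The rate is then $\Phi(\x)=\sum_i \phi(x_i)$.

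By the definition of log-majorization given above, the hypothesis $\sigmab'^{\downarrow} \prec_{\log} \sigmab^{\downarrow}$ becomes the ordinary majorization $\log\sigmab' \prec \log\sigmab$. The equality of the full sums is exactly the assumption $|\det(\H(\Thetab))| = |\det(\H(\Thetab'))|$.

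\textbf{Step 2: convexity of $\phi$.} A direct computation gives
\[
\phi'(x) = \frac{2\rho e^{2x}}{(1+\rho e^{2x})\ln 2}, \qquad \phi''(x) = \frac{4\rho e^{2x}}{(1+\rho e^{2x})^2\ln 2} > 0 .
\]
So $\phi$ is strictly convex on $\mathbb{R}$.

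\textbf{Step 3: Karamata's inequality.} By the classical majorization result (Hardy–Littlewood–Pólya/Karamata; see \cite{OlkinBook}), a separable sum $\sum_i\phi(x_i)$ with $\phi$ convex is Schur-convex. Hence $\x'\prec\x$ implies $\Phi(\x')\le\Phi(\x)$. This is exactly the claimed inequality. Equivalently, $\prod_i(1+\rho\sigma_i^2)$ is Schur multiplicatively convex in the sense of the earlier definition.

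\textbf{Main obstacle.} The real subtlety is the direction of the inequality together with convexity in the log domain. In the $\sigma$ domain, $\log(1+\rho\sigma^2)$ is not convex everywhere, so one must work in $\log\sigma$ coordinates, where convexity holds globally.

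At first this seems counterintuitive: for a fixed determinant one might expect equal singular values to maximize the rate. In fact, for a fixed product, spreading the singular values increases $\prod_i(1+\rho\sigma_i^2)$. For example, with $r=2$ and $\sigma_1\sigma_2=c$, we get $1+\rho^2c^2+\rho(\sigma_1^2+\sigma_2^2)$, which grows with spread. This confirms that the inequality direction stated in the lemma is correct.

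If a self-contained argument is preferred over citing Karamata, I would use the standard Abel-summation proof. Set $c_i = (\phi(x_i)-\phi(x'_i))/(x_i-x'_i)$. These slopes are nonincreasing in $i$ by convexity and the ordering of both vectors. Summation by parts against the nonnegative partial sums $\sum_{i\le k}(x_i-x_i')$, whose last term is zero, then gives the result.
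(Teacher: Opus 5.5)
Your proposal is correct and follows essentially the same route as the paper: pass to $u_i=\log\sigma_i$, note that $\phi(u)=\log(1+\rho e^{2u})$ is convex, and conclude that the rate is Schur multiplicatively convex, so that $\sigmab'^{\downarrow}\prec_{\log}\sigmab^{\downarrow}$ gives the inequality. The differences are minor: the paper checks the differential criterion of Lemma 2.2 in \cite{CHU2012412} instead of citing Karamata, and it simply assumes $\sigma_i>0$; your zero-singular-value case would be handled more cleanly by the weak-majorization form of Karamata for increasing convex $\phi$ than by the unspecified limiting argument you sketch.
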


\begin{proof} The function
    \[f(\sigmab)=\sum_{i=1}^r \log(1+\rho \,\sigmab_i^2) \qquad (\sigmab_i>0)\]
is Schur-multiplicative convex in $\sigmab$ for positive $\rho$.
This follows from the log-transform
\[H(\u)=f(e^{u_1},\dots,e^{u_r})=\sum_{i=1}^r \phi(u_i), 
\]
with $\phi(u)= \, \log \, \bigl(1+\rho\,e^{2u} \bigr)$, where the function $\phi$ is convex for $\rho>0$. Additionally, it can be checked that the function satisfies 
\begin{equation*}
    (\log \sigma_1 - \log \sigma_2) \left( \sigma_1 \frac{\partial f}{ \partial\sigma_1} -  \sigma_2 \frac{\partial f}{ \partial\sigma_2} \right) \geq 0,
\end{equation*}
which is the condition for $f$ to be Schur multiplicatively convex (see Lemma 2.2 in \cite{CHU2012412}).
\end{proof}

Lemma \ref{lemma:logmaj} shows that, of all equivalent channels with the same determinant, the optimal one in terms of rate is the one whose singular values log-majorize those of any other. This optimal solution is $\sigmab_H^{\downarrow} = \sigmab_F^{\downarrow} \sigmab_G^{\downarrow}$, obtained by the BD-RIS $\Thetab = \V_{F_1}\V_{G_1}^H$, which achieves an optimal matching between the eigenmodes of $\F$ and those of $\G$. Therefore, among all rank-$r$ matrices satisfying $\Thetab^H\Thetab \preceq \I_M$, $\Thetab = \V_{F_1}\V_{G_1}^H$ maximizes the rate and also maximizes the determinant (and, incidentally, also the Frobenius norm) of the equivalent channel. The optimal symmetric BD-RIS in Theorem \ref{th:theoremU} gives an equivalent channel whose singular values are log-majorized by those of the optimal unitary solution with optimal eigenmode matching. The following result bounds the rate penalty between the optimal symmetric and unitary solutions, showing that it disappears when the SNR or the number of elements of the BD-RIS grows unbounded.

\begin{lemma}{\bf [Rate gap].}
\label{lemma:capgap}
The rate gap between the optimal unitary BD-RIS and the optimal symmetric BD-RIS can be bounded as
\begin{equation}
    \Delta R \leq r \log \left( \frac{(1 + \rho \sigma_{f_r}^2 \sigma_{g_r}^2) \sigma_{f_1}^2 \sigma_{g_1}^2}{(1 + \rho \sigma_{f_1}^2 \sigma_{g_1}^2)  \sigma_{f_r}^2 \sigma_{g_r}^2} \right),
    \label{eq:DeltaC}
\end{equation}
where $\rho = P/(N_t\sigma^2)$ is the per-antenna SNR, and $\sigma_{f_1}(\sigma_{f_r}$) and $\sigma_{g_1}(\sigma_{g_r})$ are the max (min) singular values of $\F$ and $\G$, respectively

Furthermore, $ \Delta R \to 0$ as $\rho \to \infty$ or $M \to \infty$.
\end{lemma}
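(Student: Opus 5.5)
We compare the two equivalent channels through their singular values, using that both have the same determinant ${\rm D}_{\rm max}$. Let $\sigma_{h_i}$ denote the singular values of the optimal unitary channel \eqref{eq:Htheta}, i.e., $\sigmab_H^{\downarrow}=\sigmab_F^{\downarrow}\sigmab_G^{\downarrow}$. Let $\sigma'_{h_i}$ denote those of the Max-Det channel of Theorem~\ref{th:theoremU}. By \eqref{eq:capapp} applied to both channels, the $r\log\rho$ terms cancel, and so do the log-determinant terms, since $|\det\H(\Thetab_{opt})|=|\det\H(\V_{F_1}\V_{G_1}^H)|={\rm D}_{\rm max}$. Hence
\begin{equation*}
\Delta R=\sum_{i=1}^r \log\Bigl(1+\tfrac{1}{\rho\sigma_{h_i}^2}\Bigr)-\sum_{i=1}^r \log\Bigl(1+\tfrac{1}{\rho\sigma_{h_i}'^2}\Bigr)\le \sum_{i=1}^r \log\Bigl(1+\tfrac{1}{\rho\sigma_{h_i}^2}\Bigr)-\sum_{i=1}^r \log\Bigl(1+\tfrac{1}{\rho\sigma_{h_i}'^2}\Bigr),
\end{equation*}
where nonnegativity of $\Delta R$ already follows from Lemma~\ref{lemma:logmaj}. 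It therefore suffices to bound each term from above and the subtracted term from below.

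\textbf{Step 1: uniform bounds on the singular values.} For the unitary channel, $\sigma_{f_r}\sigma_{g_r}\le \sigma_{h_i}\le \sigma_{f_1}\sigma_{g_1}$ holds trivially. For the Max-Det channel, $\Thetab_{opt}$ has all $2r$ nonzero singular values equal to one, so $\sigma_{\max}(\F\Thetab_{opt}\G^H)\le\sigma_{f_1}\sigma_{g_1}$ by submultiplicativity. For the lower end, the paper's analysis gives log-majorization $\sigmab'^{\downarrow}\prec_{\log}\sigmab_H^{\downarrow}$, whose equality of products together with the partial-product inequalities yields $\sigma'_{h_r}\ge \sigma_{h_r}\ge\sigma_{f_r}\sigma_{g_r}$. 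Thus all $\sigma'_{h_i}$ also lie in $[\sigma_{f_r}\sigma_{g_r},\sigma_{f_1}\sigma_{g_1}]$; the lower end is all we actually need.

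\textbf{Step 2: termwise bound.} Since $x\mapsto\log(1+1/(\rho x^2))$ is decreasing, each term of the first sum is at most $\log(1+1/(\rho\sigma_{f_r}^2\sigma_{g_r}^2))$. Each term of the second sum is at least $\log(1+1/(\rho\sigma_{f_1}^2\sigma_{g_1}^2))$. Subtracting and rewriting
\begin{equation*}
\log\frac{1+\frac{1}{\rho a}}{1+\frac{1}{\rho b}}=\log\frac{(1+\rho a)\,b}{(1+\rho b)\,a},
\qquad a=\sigma_{f_r}^2\sigma_{g_r}^2,\quad b=\sigma_{f_1}^2\sigma_{g_1}^2,
\end{equation*}
and multiplying by $r$ gives exactly \eqref{eq:DeltaC}.

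\textbf{Step 3: asymptotics.} As $\rho\to\infty$, the ratio inside the logarithm tends to $(\rho a\, b)/(\rho b\, a)=1$, so $\Delta R\to0$. As $M\to\infty$, Remark~\ref{remark:Mgrows} gives $\sigma_{f_1}/\sigma_{f_r}\to1$ and $\sigma_{g_1}/\sigma_{g_r}\to1$, since both extreme singular values behave like $\sqrt M\pm\sqrt r$. Hence $b/a\to1$, while $\rho a,\rho b\to\infty$, so the ratio again tends to $1$.

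\textbf{Main obstacle.} The delicate step is the lower bound on the smallest singular value of the Max-Det channel, $\sigma'_{h_r}\ge\sigma_{f_r}\sigma_{g_r}$. The log-majorization $\sigmab'\prec_{\log}\sigmab_H$ is asserted in the text but rests on the structure of $\Thetab_{opt}$. If that route is not airtight, I would instead compute $\F\Thetab_{opt}\G^H$ directly using $\A=[\V_{F_1},\V_{G_1}^*]$. Then $\V_{F_1}^H\U\blkdiag(\I,-\I)\U^T\V_{G_1}^*$ is an $r\times r$ block, and Appendix~A presumably shows it is unitary, being the cross block of a reflection. Its smallest singular value is then one, and $\sigma'_{h_r}\ge\sigma_{f_r}\cdot1\cdot\sigma_{g_r}$ follows from the product inequality for singular values.
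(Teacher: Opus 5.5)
Your argument is correct and is essentially the paper's. The $r\log\rho$ and log-determinant terms cancel because both channels attain ${\rm D}_{\rm max}$. The first sum is bounded by its largest term using $\sigma_{h_i}\ge\sigma_{f_r}\sigma_{g_r}$. The second sum is bounded by its smallest term using $\sigma'_{h_1}\le\sigma_{f_1}\sigma_{g_1}$, which you obtain from submultiplicativity, whereas the paper cites the $k=1$ case of the log-majorization \eqref{eq:logbound}.

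One correction to your commentary. Step~2 never uses the lower bound $\sigma'_{h_r}\ge\sigma_{f_r}\sigma_{g_r}$ that you call the main obstacle; it only needs the upper end for the Max-Det channel. Positivity of the $\sigma'_{h_i}$ already follows from $|\det\H(\Thetab_{opt})|={\rm D}_{\rm max}>0$. So that step and its fallback can simply be dropped.
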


\begin{proof}
A corollary that follows from Lemma \ref{lemma:logmaj} is that, among all equivalent channels with maximum determinant, the worst-case in terms of rate is the channel whose singular values are log-majorized by any other solution with equal determinant. Let  $\H(\Thetab) = \F \Thetab \G^H$ be an equivalent channel generated by $\Thetab$ with singular values $\sigmab^{\downarrow}$, such that $|\det(\H(\Thetab))| = \prod_{i=1}^r \sigma_{f_i} \sigma_{g_i}$. Then, $\sigmab^{\downarrow}$ satisfies the inequality \cite[Chapter 9, Theorem H.1]{OlkinBook}
\begin{equation}
\sigmab^{\downarrow} \prec_{\log}  \sigmab_F^{\downarrow}  \sigmab_G^{\downarrow}.
\label{eq:logbound}
\end{equation}

Now, we can use the rate expression in \eqref{eq:capapp} to bound the rate gap as
\begin{align*}
   \Delta R &=   \sum_{i=1}^r \log \left( 1 + \frac{1}{\rho (\sigma_{f_i} \sigma_{g_i})^2}\right) -  \sum_{i=1}^r \log \left( 1 + \frac{1}{\rho \sigma_i^2}\right) \\
   & \overset{(a)}{\leq } r \left(\log \left( 1 + \frac{1}{\rho (\sigma_{f_r} \sigma_{g_r})^2}\right) - \log \left( 1 + \frac{1}{\rho \sigma_{\rm max}^2}\right) \right) \\
    & \overset{(b)}{\leq } r \left(\log \left( 1 + \frac{1}{\rho (\sigma_{f_r} \sigma_{g_r})^2}\right) - \log \left( 1 + \frac{1}{\rho (\sigma_{f_1} \sigma_{g_1})^2}\right) \right) 
    \label{eq:bounds}
\end{align*}
where in $(a)$ we have taken the largest term of the first summation (corresponding to the smallest $\sigma_{f_r} \sigma_{g_r}$) and the smallest term of the last summation (corresponding to the largest $\sigma_{\rm max}$ of the worst-case channel, while in $(b)$ we have used the inequality in \eqref{eq:logbound} to bound $\sigma_{\rm max} \leq \sigma_{f_1} \sigma_{g_1}$. From this result, the inequality \eqref{eq:DeltaC} follows. It is easy to check that $\Delta R$ goes to zero when $\rho \to \infty$ or $M \to \infty$ (since all singular values concentrate around $\sqrt{M}$).
\end{proof}

\begin{remark}{\bf [Bound tightness]}
  The bound $\sigma_{\rm max} \leq \sigma_{f_1} \sigma_{g_1}$ is not necessarily tight for the worst channel in terms of rate. To derive tighter bounds, a complete characterization of the singular values of $\H(\Thetab) = \F \Thetab \G^H$ in terms of those of $\F$ and $\G$ is needed, which is given by the so-called multiplicative Horn inequalities \cite{BERCOVICI2015400},\cite{KLYACHKO200037},\cite{Loyka15}, \cite{Zhao25}. In particular, a tighter bound can be derived from the inequality $\sigma_{\rm max} \leq \max (\sigmab_F^{\downarrow}  \sigmab_G^{\uparrow})$, in which the singular values in descending order of $\F$ are aligned with the singular values in ascending order of $\G$.  
\end{remark}

Let us illustrate the results of this subsection with a simple example. We consider a $2 \times 2$ MIMO system ($N_t=N_r=r=2$) assisted by a BD-RIS with $M=16$ elements. The entries of the forward, $\G$, and backward, $\F$ channels are i.i.d. complex Gaussian with mean 0 and variance 1. The curve in Fig. \ref{fig:SVs1} represents the singular values of the equivalent channels generated as
\begin{equation*}
    \H(\Thetab') = \F \left( \underbrace{\V_{F_1} \begin{bmatrix} \cos(\phi) & -\sin(\phi) \\ \sin(\phi) & \cos(\phi)
    \end{bmatrix} \V_{G_1}^H}_{\Thetab'} \right) \G^H, 
\end{equation*}
where $0\leq \phi \leq \pi/2$.
The equivalent channel obtained for $\phi= 0$ optimally aligns the eigenmodes of $\F$ with those of $\G$, maximizing both the determinant of the equivalent channel and the rate. The singular values of this channel correspond to the point furthest to the right on the curve in Fig. \ref{fig:SVs1}. When $\phi$ varies from 0 to $\pi/2$, the singular values of the equivalent channel travel along the curve from right to left, obtaining other alignments of the eigenmodes, which also maximize the determinant but not the rate. The symmetric solution described in Theorem \ref{th:theoremU} yields the intermediate point shown in blue in Fig. \ref{fig:SVs1}.
 
\begin{figure}[htb]
    \centering
\includegraphics[width=.47\textwidth]{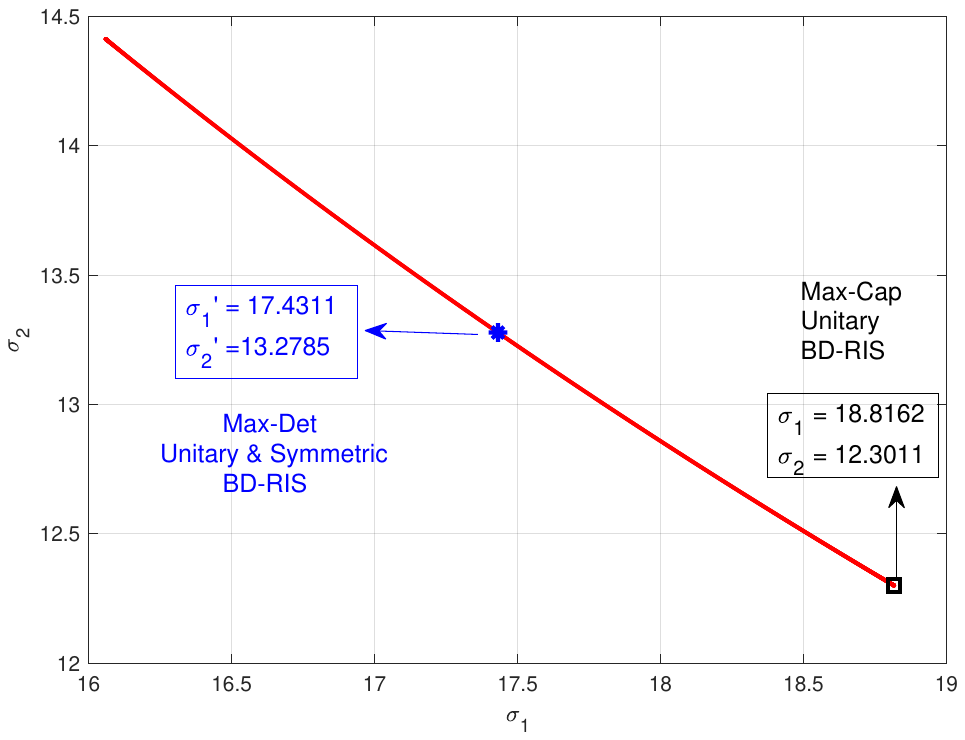}
     \caption{The curve represents the singular values of all equivalent channels $\H(\Thetab) = \F \Thetab \G^H$ with maximum determinant for a $2 \times 2$ MIMO system assisted by a BD-RIS with $M=16$ elements.}
	\label{fig:SVs1}
\end{figure}

For this example, the rate gap between the unitary solution and the symmetric solution at $\rho = 1$ (0 dB) and $\rho = 100$ (20 dB) is $\Delta R = 6.72 \cdot 10^{-4}$ and $\Delta R = 6.78 \cdot 10^{-6}$, respectively, while the bound provided in Lemma \ref{lemma:capgap} yields $1.9  \cdot 10^{-2}$ (for $\rho = 1$) and $1.9  \cdot 10^{-4}$ (for $\rho = 100$). This suggests that the penalty in rate is generally negligible, which will be confirmed in Sec. \ref{sec:simulations}.
\subsection{A suboptimal solution for unblocked direct channels} The solution in Theorem \ref{th:theoremU} is only optimal if the direct channel is blocked. When a direct channel between Tx and Rx, $\H_d$, exists, the equivalent channel is $\H(\Thetab) = \H_d + \F \Thetab \G^H$, and a closed-form solution for the BD-RIS that maximizes the determinant (let alone the rate) is unknown and seems unlikely. Therefore, to maximize rate with an unblocked direct link, it is still necessary to apply optimization algorithms such as those proposed in \cite{SantamariaSPAWC24}, \cite{santamariaICASSP26}. 

Nevertheless, taking the result of Theorem \ref{th:theoremU} as a starting point, it is possible to propose a suboptimal solution to the maximum rate problem with an unblocked direct link, which, as will be seen in Sec. \ref{sec:simulations}, offers a competitive solution. First, note that the solution given in Theorem \ref{th:theoremU}, $ \Thetab_{opt}$, can be rotated arbitrarily as $e^{j \phi}\Thetab_{opt}$ without changing $|\det(\F\Thetab_{opt}\G^H)|$. This arbitrary phase, $\phi$, can be chosen as the solution to
\begin{multline}
    ({\cal P}_3): \,\max_{\phi \in [0, 2\pi)]}\,\, \log \det \big( \I_{N_r} + \rho \, (\H_d + e^{j \phi} \H(\Thetab_{opt})  )\\
    \times (\H_d + e^{j \phi}  \H(\Thetab_{opt}) )^H \big), \label{eq:phi}
\end{multline}
where $\H(\Thetab_{opt}) = \F \Thetab_{opt} \G^H$. 

\section{Max-Det BD-RIS implementation through a $q$-stem architecture}
\label{sec:qstem}
A BD-RIS is implemented through an $M$-port network interconnected through a reconfigurable admittance matrix, $\Y$, which is the inverse of the impedance \cite{ClerckxTWC22a}, \cite{Nossek24}. To maximize the power reflected by the intelligent surface, the admittance matrix must be purely imaginary, i.e., $\Y = j\B$, where $\B$ is the so-called susceptance matrix \cite{Nerini24}. \cite{Nerini2026surveyBDRIS}. The scattering matrix, $\Thetab$, and the susceptance matrix, $\B$, are related by
\begin{equation}
    \Thetab = (\I_M + jZ_0 \B)^{-1}(\I_M - jZ_0 \B),
    \label{eq:B2Theta}
\end{equation}
where $Z_0$ is the reference impedance, usually set to  $Z_0 =  50 \, \Omega$. Note that the elements $b_{ii}$, $i=1,\ldots, M$ of $\B$ represent the impedances connecting each reflecting element to ground, while the elements $b_{ij}$ ($i \neq j$) model the impedances connecting elements $i$ and $j$. In the {\it fully-connected} architecture \cite{ClerckxTWC22a}, \cite{ClerckxTWC22b}, \cite{ClerkxBDRISTut26}, all reflecting elements are interconnected and, consequently, $\B$ is a real and symmetric $ M \times M$ matrix with a total number of $M(M+1)/2$ reconfigurable circuits. More recently, the $q$-stem structure, proposed in \cite{ICC_QStem} and analyzed in detail in \cite{WuTIT25}, \cite{Zhou2025Arxiv}, \cite{Nerini24}, has shown to be capable of achieving the performance of fully connected networks with a significantly lower number of impedances. The susceptance matrix of the $q$-stem topology is structured as
\begin{equation}
        \B = \begin{bmatrix} \B_{11} & \B_{12}^T \\ \B_{12}  & \Gammab \end{bmatrix},
        \label{eq:qstem}
\end{equation}
where $\B_{11} = \B_{11}^T$ is a $q \times q$  symmetric and real matrix, $\B_{12}$ is a $(M-q) \times q$ block and $\Gammab = \diag(b_{qq+1},\ldots b_{M}))$ is a $(M-q) \times (M-q)$ diagonal matrix. Therefore, the lower-right off-diagonals of $\B$ in the $q$-stem topology are zero, thus reducing the total number of reconfigurable elements to $\nu = q(q+1)/2 + (M-q)(q+1)$. Fig. \ref{fig:qstemNew} shows an example of the structure of the susceptance matrix, $\B$, with $M = 6$ elements and $q = 2$ stems.

Theorem 1 and Corollary 2 in \cite{WuTIT25} prove that the $2r$ low-rank Max-Det solution can be implemented by a $q$-stem topology with $q = 2r-1$ stems. The following lemma presents an alternative, yet simple, proof of the same result connecting the sparsity of the $q$-stem topology with the low-rank structure in the scattering matrix.

\begin{figure}[ht]
    \centering

\begin{tikzpicture}[
  mycell/.style={
    draw,
    minimum width=7mm,
    minimum height=7mm,
    inner sep=0pt,
    anchor=center
  }
]
\matrix (M) [matrix of math nodes,
             nodes=mycell,
             row sep=-\pgflinewidth,
             column sep=-\pgflinewidth] {
  |[fill=gray!80]| b_{11} & |[fill=blue!25]| b_{21} & |[fill=red!25]| b_{31} & |[fill=green!25]| b_{41} & |[fill=yellow!30]| b_{51} & |[fill=gray!40]| b_{61} \\
  |[fill=blue!25]| b_{21}  & |[fill=red!55]|  b_{22}  & |[fill=green!10]| b_{32} & |[fill=magenta!25]| b_{42} & |[fill=orange!20]| b_{52} & |[fill=purple!50]| b_{62} \\
  |[fill=red!25]| b_{31}  & |[fill=green!10]| b_{32} & |[fill=yellow!65]| b_{33} & |[fill=none]| 0 & |[fill=none]| 0 & |[fill=none]| 0 \\
  |[fill=green!25]| b_{41} & |[fill=magenta!25]| b_{42} & |[fill=none]| 0 & |[fill=blue!50]| b_{44} & |[fill=none]| 0 & |[fill=none]| 0 \\
  |[fill=yellow!30]| b_{51} & |[fill=orange!20]| b_{52} & |[fill=none]| 0 & |[fill=none]| 0 & |[fill=green!50]| b_{55} & |[fill=none]| 0 \\
  |[fill=gray!40]| b_{61} & |[fill=purple!50]| b_{62} & |[fill=none]| 0 & |[fill=none]| 0 & |[fill=none]| 0 & |[fill=cyan!10]| b_{66} \\
};
\end{tikzpicture}

 \caption{Structure of a $q$-stem matrix $\B$ with $M=6$ and $q=2$.}
    \label{fig:qstemNew}
\end{figure}

\begin{lemma}{\bf [$q$-stem structure for the Max-Det solution].}
\label{lemma:qstem}
Let $\B$ be a real and symmetric $M \times M$ matrix with the $q$-stem structure as in \eqref{eq:qstem}, and let $\Thetab_{lr} = \Q \Q^T$ be the rank-$2r$ Max-Det solution. If $q=2r-1$, there exists a $\B$ with corresponding scattering matrix $\Thetab_{\B} = (\I_M + jZ_0 \B)^{-1}(\I_M - jZ_0 \B)$ such that $|\det(\F\Thetab_{\B}\G^H)| = |\det(\F\Thetab_{lr}\G^H)|$. Therefore, $\Thetab_{\B}$ is a Max-Det solution that can be implemented with the $q$-stem topology.
\end{lemma}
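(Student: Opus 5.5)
The plan is to exploit the freedom that the determinant leaves unused, which is the behaviour of $\Thetab$ outside the subspace spanned by $\U$. We pick, among the many symmetric unitary matrices that produce the same equivalent channel up to a phase, one whose Cayley transform \eqref{eq:B2Theta} has the $q$-stem pattern \eqref{eq:qstem}.

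\textbf{Step 1: completions that leave the channel unchanged.} From Theorem \ref{th:theoremU}, $\Thetab_{lr}=\Q\Q^T$ with $\Q=\U\blkdiag(\I_r,-j\I_r)$, and the columns of $\V_{F_1}$ and $\V_{G_1}^*$ lie in $\mathrm{span}(\U)$. Let $\U_\perp\in\mathbb{S}_t(M-2r,\mathbb{C}^M)$ span the orthogonal complement. Then $\F\U_\perp=\U_{F_1}\Sigmab_F\V_{F_1}^H\U_\perp=\0$, and similarly $\U_\perp^T\G^H=(\G\U_\perp^*)^T=\0$, since $\V_{G_1}^*\perp\U_\perp$ implies $\V_{G_1}^H\U_\perp^*=\0$. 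Hence, for any phase $\phi$ and any symmetric unitary $\X\in\CU_s(M-2r)$, the matrix
\[
\Thetab=e^{j\phi}\left(\Q\Q^T+\U_\perp\X\U_\perp^T\right)
\]
is unitary and symmetric: it equals $\W\W^T$ with $\W=[\Q,\ \U_\perp\X^{1/2}]$ unitary, where $\X^{1/2}$ is a symmetric Takagi square root. It also satisfies $|\det(\F\Thetab\G^H)|=|\det(\F\Thetab_{lr}\G^H)|$. The parameters $\phi$ and $\X$, together with the choice of basis in $\U_\perp$, are the degrees of freedom to spend.

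\textbf{Step 2: make $\B$ a scalar matrix plus a low-rank term.} Choose $\X=\I_{M-2r}$ and $\U_\perp$ real orthogonal within the complement where possible; more generally, choose $\U_\perp\X\U_\perp^T$ so that $\Thetab$ acts as a scalar $e^{j\phi}$ on an $(M-2r)$-dimensional invariant subspace. Pick $\phi$ so that $-1\notin\mathrm{eig}(\Thetab)$, which is needed because $\blkdiag(\I_r,-\I_r)$ itself contains $-1$; with this choice $\B=\frac{1}{jZ_0}(\I+\Thetab)^{-1}(\I-\Thetab)$ exists. Because $\Thetab$ is normal, $\B$ is a function of $\Thetab$. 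It is therefore constant, equal to $b_0=-\tan(\phi/2)/Z_0$, on that eigenspace, so $\B=b_0\I_M+\R$ with $\R$ real symmetric of rank at most $2r$. Realness and symmetry of $\B$ follow automatically from $\Thetab$ being unitary and symmetric.

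\textbf{Step 3: reach the $q$-stem pattern.} A matrix $b_0\I+\R$ has a diagonal lower-right block exactly when the rows of a factor $\R=\C\C^T$ indexed by $q+1,\dots,M$ are mutually orthogonal. With $\C\in\mathbb{R}^{M\times 2r}$, this holds trivially for $q=2r$ after a real rotation of the factor, and the goal is to save one more stem. Here the remaining free parameters are used: the phase $\phi$, which shifts $b_0$ and rescales $\R$, and the non-uniqueness of the complement $\U_\perp\X\U_\perp^T$. The aim is to make one eigenvalue of the $2r$-dimensional compression coincide with the bulk value $b_0$. That drops the rank of $\R$ to $2r-1$, after which the same row-orthogonality argument gives $q=2r-1$. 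Concretely, I would tune $\phi$ so that one of the $2r$ eigenvalues of $\Thetab$ on $\mathrm{span}(\U)$ equals $e^{j\phi}$. Note that $|\det|$ is invariant under the phase and under $\blkdiag$ sign patterns.

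\textbf{Main obstacle.} Step 3 is the hard step: the rank reduction from $2r$ to $2r-1$, and ensuring that the bulk block can be made genuinely diagonal with real entries rather than only block-diagonal. If matching an eigenvalue through $\phi$ is not enough, I would fall back on a counting argument. The $q$-stem family with $q=2r-1$ has $\nu=q(q+1)/2+(M-q)(q+1)$ real parameters. I would compare this with the dimension of the set of symmetric unitary $\Thetab$ that agree with $\Thetab_{lr}$ on $\mathrm{span}(\U)$, and then invoke the constructive argument of Theorem 1 in \cite{WuTIT25} to show that the intersection is nonempty.
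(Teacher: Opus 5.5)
Your proposal has a genuine gap in Steps 2--3: the restriction you impose in Step 2 removes exactly the freedom the $q$-stem pattern needs.

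\textbf{Why Steps 2--3 fail.} Once $\Thetab$ acts as the scalar $e^{j\phi}$ on an $(M-2r)$-dimensional subspace, you have $\B=b_0\I_M+\R$ with $\rank(\R)\le 2r$. The channel-preserving condition $\Thetab\Q^*=e^{j\phi}\Q$ becomes, after substituting the Cayley form and setting $\tilde{\Q}=e^{j\phi/2}\Q$, the linear equation $Z_0\B\Re(\tilde{\Q})=-\Im(\tilde{\Q})$. Hence $\R\Re(\tilde{\Q})=-Z_0^{-1}\Im(\tilde{\Q})-b_0\Re(\tilde{\Q})$. The right-hand side generically has rank $2r$, which forces the column space of $\R$ and then generically determines $\R$ uniquely.

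So for each $\phi$ you have a single candidate. Its lower-right $(M-q)\times(M-q)$ block is a generic dense matrix. One real parameter cannot annihilate its $\binom{M-q}{2}$ off-diagonal entries once $M\ge 2r+2$. The ``non-uniqueness of the complement'' you invoke in Step 3 was already spent in Step 2: any non-scalar action on the complement destroys the scalar-plus-low-rank form that Step 3 relies on.

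The claim that the lower block becomes diagonal for $q=2r$ ``after a real rotation of the factor'' is also false. The map $\C\mapsto\C\mathbf{O}$ leaves $\C\C^T$, and hence the Gram matrix of its rows, unchanged. Moreover, more than $2r$ nonzero rows in $\mathbb{R}^{2r}$ can never be mutually orthogonal, and $\R$ is generally indefinite, so $\R=\C\C^T$ need not exist. What would make the step trivial is rotating the port coordinates, $\B\mapsto\mathbf{O}\B\mathbf{O}^T$. That replaces $\Thetab$ by $\mathbf{O}\Thetab\mathbf{O}^T$ and changes the channel, because $\F$ and $\G$ are fixed in those coordinates.

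\textbf{The missing idea.} The paper does not fix the action on the complement at all. It imposes $\Thetab_{\B}\Q^*=\Q$ directly, which the Cayley form turns into the linear system $\B\Re(\Q)=-\Im(\Q)$ (with $Z_0$ absorbed) in the entries of $\B$. Restricting $\B$ to the $q$-stem pattern (dense $\B_{11}$ and $\B_{12}$, arbitrary diagonal $\Gammab$) leaves $\nu$ real unknowns, and for $q=2r-1$ one has $\nu=2rM-2r^2+r$. The paper argues that, because $\Q$ lies on the Stiefel manifold, the system has only $2rM-2r^2$ independent equations and a generically full-rank coefficient matrix, so a solution exists. A non-scalar $\Gammab$ is precisely the non-scalar action of $\Thetab_{\B}$ on the complement that your Step 2 rules out.

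Your fallback does not close the gap either. Comparing the dimension of the $q$-stem family with that of the set of admissible symmetric unitary matrices says nothing about whether two nonlinear sets intersect. Deferring to Theorem 1 of \cite{WuTIT25} simply outsources the claim. It is the linearity in $\B$ of the condition $\Thetab_{\B}\Q^*=\Q$ that makes a counting argument meaningful.
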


\begin{proof}
For $\Thetab_{\B}$, generated with a $q$-stem $\B$, to have identical performance to the Max-Det low-rank $\Thetab_{lr} = \Q \Q^T$, it must be possible to decompose it as $\Thetab_{\B} = \Q \Q^T + \Q_{\perp} \Q_{\perp}^T$, where $\Q_{\perp}$ is the orthogonal complement of $\Q$. Post-multiplying by $\Q^*$, this yields the condition $\Thetab_{\B}\Q^* = \Q$. Substituting $\Thetab_{\B} = (\I_M + jZ_0 \B)^{-1}(\I_M - jZ_0 \B)$ in this expression we get
\begin{equation}
 (\I_M - jZ_0 \B)\Q^* =  (\I_M + jZ_0 \B)\Q, 
\end{equation}
which can be rewriten as $\B \C = - \D$, where $\C = jZ_0(\Q^* + \Q) = j2Z_0 \Re(\Q) $ and $\D = jZ_0( \Q - \Q^*) = j2Z_0 \Im(\Q)$, which is\footnote{However, note that there is a minus sign in our equation that does not appear in \cite[Eq. (32)]{Zhao2024Arxiv}.} essentially Eq. (32) in \cite{Zhao2024Arxiv}. Therefore, the $q$-stem matrix $\B$ is the solution to the linear system of equations $\B \Re(\Q) = -\Im(\Q)$. Using standard vectorization operations, the system can be rewritten as
\begin{equation}
    \underbrace{\left(\Re(\Q)^T \otimes \I_M \right) {\bf R}}_{\W} {\bf b} = - \vec(\Im(\Q)),
    \label{eq:system}
\end{equation}
where ${\bf R}\b = \vec(\B)$, with ${\bf R}$ an $M^2 \times \nu$ selection matrix that places the zeros of $\B$ in the right position (cf. \cite{Zhao2024Arxiv}) and $\b$ is a vector with the $\nu = q(q+1)/2 + (M-q)(q+1)$ nonzero elements of $\B$. The linear system in \eqref{eq:system} has $2Mr$ equations and $\nu$ unknowns. However, not all equations are linearly independent. The reason is that $\Re(\Q)$ and $\Im(\Q)$ are constrained in such a way that $\Q =\Re(\Q) + j \Im(\Q)$ belongs to $\mathbb{S}_t(2r,\mathbb{C}^M)$, the complex Stiefel manifold of $2r$-dimensional frames in $\mathbb{C}^M$. The number of real dimensions of $\mathbb{S}_t(2r,\mathbb{C}^M) $ is ${\rm dim} (\mathbb{S}_t(2r,\mathbb{C}^M)) = 4rM -4r^2$ \cite[Ch. 9]{Coherence}. Therefore, the number of independent variables in $\Im(\Q)$ is $2rM -2r^2$. All in all, \eqref{eq:system} is a linear system of $2rM -2r^2$ independent equations with $\nu = q(q+1)/2 + (M-q)(q+1)$ unknowns. Now, it is a trivial exercise to show that when $q= 2r-1$, the number of unknowns is $\nu = 2rM -2r^2 + r$, which is larger than the number of equations, thus ensuring a solution exists because $\W$ has full column rank generically \footnote{In fact, this simple analysis suggests that it would be possible to make zero $r$ additional values of the $q$-stem structure, and still solve \eqref{eq:system} exactly. This would be the Max-Det implementation with the minimum number of reconfigurable circuits: $2rM -2r^2$.}. This completes the proof. 
\end{proof}

\section{Simulation Results}
\label{sec:simulations}
In this section, we show the performance of the proposed solution, comparing it with other baselines in different scenarios. 
\subsection{Scenario description}
\label{sec:scenario}
We consider a $4 \times 4$ MIMO system with the Tx located at coordinates (0,0,1.5) [m] and the Rx at coordinates (50,0,1.5) [m]. A BD-RIS with $M$ elements is located close to the Tx at coordinates (5,3,3) [m] to assist the Tx-Rx communication. The channels through the BD-RIS have a dominant LoS path and are therefore modeled as Rician with factor $K=2$ and path loss exponent $\alpha = 2$. We consider scenarios with and without a direct channel. When the direct channel is not obstructed, it is modeled as a Rayleigh channel with path loss exponent $\alpha = 4$. The transmit covariance matrix is isotropic with uniform power allocation across eigenmodes $\R_{xx} = \frac{P}{N_t} \I_{N_t}$. As a reference, we define the SNR\footnote{Note that the SNR at the Rx depends on the equivalent MIMO channel and, consequently, on the BD-RIS. As a reference, we have taken the SNR for a BD-RIS $\Thetab = \I_M$. However, our SNR definition is completely immaterial to the results.} 
\begin{equation}
    {\rm SNR} = 10 \log_{10} \frac{P \| \F \G^H\|_F^2}{N_t N_r \sigma^2}. 
\end{equation}
The results shown are the average of 200 independent simulations (channel realizations) keeping the Tx, Rx and BD-RIS positions fixed. 

\subsection{Perfomance assessment}
The objective of the first experiment is to compare the solution that maximizes the rate and the proposed solution that maximizes the absolute value of the determinant of the equivalent channel (Algorithm \ref{alg:snippet}). We consider a $4 \times 4$ MIMO scenario with the direct link blocked so communication is only possible through the channel created by the BD-RIS. We compare the results obtained by the following methods:
\begin{itemize}
    \item The unitary BD-RIS that maximizes rate (and also the determinant) (see Remark \ref{remark:unitary}). In the figure it is labeled as ``Max-Rate Unit''.
    \item  The unitary and symmetric BD-RIS that maximizes the determinant (proposed solution). Labeled as `` Max-Det Unit. + Symm.''.
    \item  The unitary and symmetric BD-RIS that maximizes the rate obtained with the MO algorithm proposed in \cite{santamariaICASSP26}. Labeled as `` Max-Rate Unit. + Symm. (MO)''.
\end{itemize}

The results in Fig. \ref{fig:RatevsSNR} show that maximizing the determinant is an excellent proxy to maximize rate in a scenario without a direct channel. As can be seen, the differences with the optimal unitary BD-RIS are minimal. Moreover, at high SNRs and when the number of BD-RIS elements increases, the differences are negligible, thus validating the results of Remark \ref{remark:Mgrows}. Furthermore, the unitary and symmetric solution obtained with the MO algorithm in \cite{santamariaICASSP26} provides rates indistinguishable from those of the unitary solution.

\begin{figure}[htb]
    \centering
\includegraphics[width=.47\textwidth]{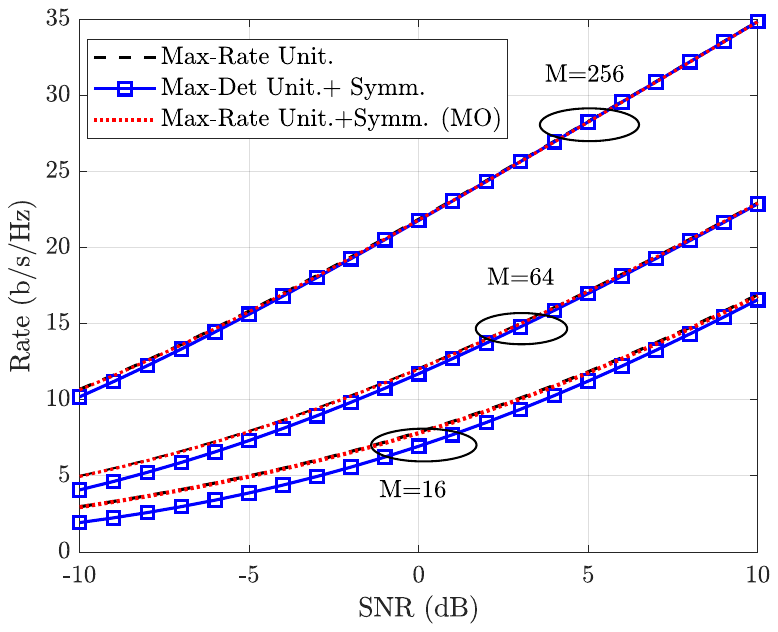}
     \caption{Rate vs. SNR in a $4 \times 4$ MIMO scenario without direct channel for three solutions: i) the unitary solution that maximizes capacity (dashed red line); ii) the unitary and symmetric solution that maximizes the determinant (solid blue with markers); and iii) the unitary and symmetric solution that maximizes capacity using the MO algorithm proposed in \cite{santamariaICASSP26}.}
	\label{fig:RatevsSNR}
\end{figure}

In the second experiment, we consider a scenario with an unblocked direct channel. To control the influence of the direct channel, we generate the channel $\H_d$ as described in Subsection \ref{sec:scenario}, and then scale it by a real value $a \geq 0$ that we vary while keeping the BD-RIS channels $\F$ and $\G$ fixed. The equivalent channel is
\begin{equation*}
    \H(\Thetab) = a \H_d + \F \Thetab \G^H.
\end{equation*}
In the experiment, we consider a $4 \times 4$ MIMO system assisted by a BD-RIS with $M=16$ reflective elements. The SNR at the receiver, before scaling, is 10 dB. All the methods compared in this experiment design symmetric and passive BD-RISs. Specifically, we compare the following methods:
\begin{itemize}
    \item The MO iterative algorithm proposed in \cite{santamariaICASSP26}, which maximizes rate with or without direct channel.
    \item The proposed Max-Det solution without phase correction.
    \item The proposed  Max-Det solution with phase correction. To find the phase term that maximizes capacity (cf. Problem ${\cal P}_3$ in \eqref{eq:phi}), we applied numerical optimization.
    \item The low-complexity closed-form solution proposed in \cite{MaoCL2024}. This suboptimal solution assumes the existence of a direct channel and seeks to maximize the Frobenius norm of the equivalent channel.
\end{itemize}
For completeness, we also include in the comparison the results without BD-RIS and with a random BD-RIS. Fig. \ref{fig:Direct_Link} shows the rate results when the scale factor of the direct link varies between $a=10^{-3}$ and $a=20$. We have highlighted three regions in which different behaviors can be observed. In the first region (green shaded region), where the direct channel is very weak, the Max-Det solutions (with or without phase correction) provide practically identical results to those of the MO algorithm, which is consistent with the results presented in Fig. \ref{fig:RatevsSNR}. The low-cost method \cite{MaoCL2024} degrades somewhat in this region, but not significantly. Clearly, the BD-RIS link determines the performance in this region. In the second region (blue shaded region), the strengths of the direct link and the BD-RIS-link are similar. Here, the phase-corrected Max-Det solution shows some improvement with respect to the standard Max-Det, but there is still a noticeable difference with the MO solution. In this regime, the determinant of the BD-RIS channel is not a good proxy for rate, which is to be expected. Finally, in the third region to the right (red shaded region), the direct channel dominates; therefore, the use of a BD-RIS (or not) does not significantly impact performance.

\begin{figure}[htb]
    \centering
\includegraphics[width=.47\textwidth]{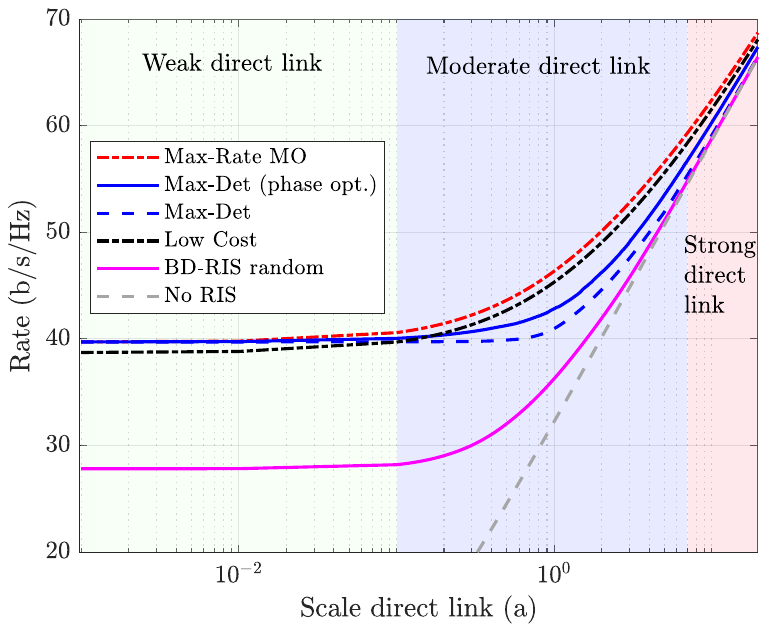}
     \caption{Rate vs. the scaling factor of the direct channel for a $4 \times 4$ MIMO scenario. }
	\label{fig:Direct_Link}
\end{figure}

\subsection{Analysis of $q$-stem implementation}
In this last example, we consider the implementation of the BD-RIS using either a fully connected architecture (providing an upper bound on the achievable rate) or the $q$-stem architecture \cite{ICC_QStem}. To obtain the susceptance matrix $\B$ with the $q$-stem structure, we applied the algorithm described in \cite{Zhou2025Arxiv}. For the fully connected implementation, we simply applied the Cayley transform \cite{Hassibi02}, \cite{Cayley46} to the scattering matrix: $jZ_0\B = (\I_M +\Thetab)^{-1}(\I_M -\Thetab)$. Fig. \ref{fig:qstem} shows the rate achieved by the $q$-stem architecture versus the number of stems $q$. The scenario is a $4\times 4$ MIMO channel; therefore, the Max-Det solution has rank 8. We observe in Fig. \ref{fig:qstem} that, with exactly $q = 2r-1 = 7$ stems, the $q$-stem topology perfectly recovers the optimal $\Thetab$ without any penalty with respect to the fully-connected implementation. 

\begin{figure}[htb]
    \centering
\includegraphics[width=.47\textwidth]{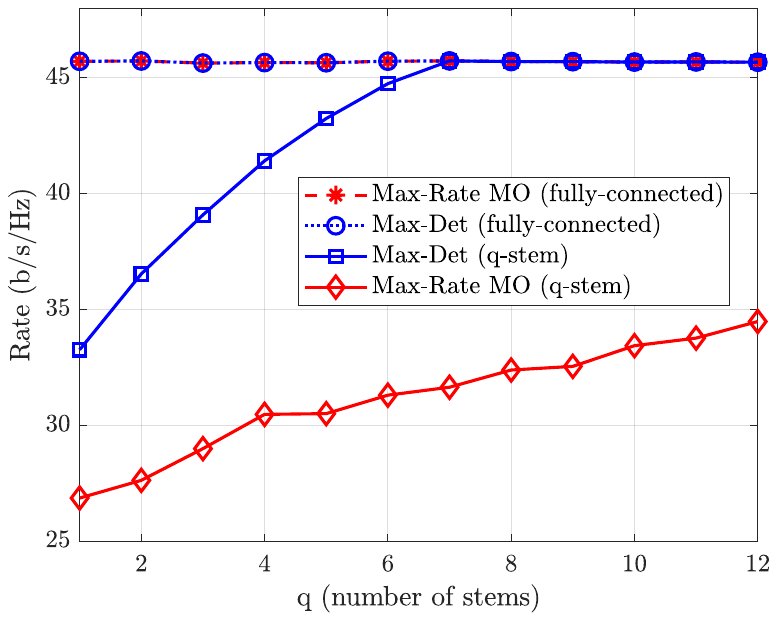}
     \caption{Rate vs. number of stems $q$ for a $q$-stem topology implementation. }
	\label{fig:qstem}
\end{figure}

\section{Conclusions}
\label{sec:conclusions}
Assuming that there is no direct channel, this work has derived a closed-form solution to the problem of maximizing the absolute value of the determinant in a MIMO link assisted by a passive and symmetric BD-RIS. An analysis of the solution allows us to bound the gap between the Max-Det solution and the solution that maximizes rate, showing that the gap decreases as the SNR or the number of BD-RIS elements increases. The rank of the optimal scattering matrix is twice the number of DoFs of the MIMO link, a fact that translates into fewer connection impedances when BD-RIS is implemented in the $q$-stem topology. The numerical simulations show that, when the direct channel is blocked, the rate achieved by the Max-Det solution is very close to that obtained with iterative algorithms that maximize rate, with a computational cost orders of magnitude lower. The solution can be extended to the multiple access channel, although a detailed analysis of its performance in multi-user scenarios is left for future work. It is also of interest for future research to study whether it is possible to obtain closed-form BD-RIS solutions for other cost functions besides the determinant, such as the Frobenius norm of the equivalent channel or the mean squared error.

\section{Acknowledgment}
This work is supported by the European Commission’s Horizon Europe, Smart Networks and Services Joint Undertaking, research and innovation program under grant agreement 101139282, 6G-SENSES project. The work of I. Santamaria was also partly supported under grant PID2022-137099NB-C43 (MADDIE) funded by MICIU/AEI /10.13039/501100011033 and FEDER, UE.

\section*{Appendix A: Proof of Theorem \ref{th:theoremU}}
\label{AppendixA}
To simplify the notation of the proof, we shall consider a symmetric $ r\times r $ MIMO channel ($N_t = N_r = r$). The generalization to asymmetric MIMO systems is straightforward. Substituting in \eqref{eq:MIMOchanneleq} the SVD factorizations of $\F$ and $\G$ given in \eqref{eq:SVDF} and \eqref{eq:SVDG}, the equivalent channel can be expressed as
\begin{equation*}
 \H(\Thetab)  =  \U_F \begin{bmatrix} \Sigmab_F & {\bf 0}_{r \times (M-r)} \end{bmatrix} \V_F^H \Thetab  \V_G \begin{bmatrix} \Sigmab_G  \\  {\bf 0}_{(M-r) \times r} \end{bmatrix} \U_G^H.
\end{equation*}
The optimal channel that solves ${\cal P}_2$ in \eqref{eq:Det} is generated by any $\Thetab$ such that 
\[
|\det(\H(\Thetab))| = \det(\Sigmab_F)\det(\Sigmab_G) = \prod_{i=1}^r \sigma_{f_{[i]}}\sigma_{g_{[i]}} = {\rm D}_{\rm max}.
\]
Certainly, a unitary (but not symmetric) optimal scattering matrix is $\Thetab = \V_F\V_G^H$, and a reduced rank-$r$ version is  $\Thetab = \V_{F_1}\V_{G_1}^H$ (see Remark \ref{remark:unitary}). The question, therefore, is how to construct a symmetric $\Thetab$ such that $|\det(\H(\Thetab))| = \det(\Sigmab_F)\det(\Sigmab_G)$. The answer to this question is that, to maximize the absolute value of the determinant, $\Thetab$ must fulfill the following condition
\begin{equation}
    {\bf T } \!=\!  \underbrace{\begin{bmatrix} \V_{F_1}^H \vspace{0.1cm} \\ \V_{F_2}^H \end{bmatrix}}_{\V_F^H} \Thetab \underbrace{\left [\V_{G_1} \, \V_{G_2} \right]}_{\V_G} \!=\! \begin{bmatrix}{\bf T}_1 & {\bf 0}_{r \times (M-r)} \\ {\bf 0}_{(M-r) \times r} & {\bf T}_2 \end{bmatrix},
    \label{eq:T}
\end{equation}
with ${\bf T}_1 \in \CU(r)$ a unitary $r\times r$ block, i.e., $\T_1^H \T_1 = \I_r$, and $\T_2$ arbitrary. It is trivial to prove that any $\Thetab$ such that the resulting $\T$ has the structure in \eqref{eq:T} satisfies \eqref{eq-20}, shown on the top of the next page, 
which attains the maximum $\det(\Sigmab_F) \det(\Sigmab_G)$ if $\T_1$ is unitary.
\begin{figure*}
\begin{align}\nonumber
 |\det(\H(\Thetab))| &= \left |\det \left ( \U_F \begin{bmatrix} \Sigmab_F & {\bf 0}_{r \times (M-r)} \end{bmatrix}  \begin{bmatrix}{\bf T}_1 & {\bf 0}_{r \times (M-r)} \\ {\bf 0}_{(M-r) \times r} & {\bf T}_2 \end{bmatrix}  \begin{bmatrix} \Sigmab_G  \\  {\bf 0}_{(M-r) \times r} \end{bmatrix} \U_G^H \right) \right | \\ & 
 =  |\det(\U_F)|\det(\Sigmab_F) |\det(\T_1)| \det(\Sigmab_G) |\det(\U_G)|,\label{eq-20}
\end{align}

    \hrulefill 
\end{figure*}
\setcounter{equation}{20}
Consequently, we need to prove that, with $\Thetab = \Q\Q^T$ chosen as indicated in Theorem \ref{th:theoremU}:
\begin{itemize}
    \item[i)] the resulting $ \T =\V_F^H \Thetab \V_G$ is structured as $\T = \blkdiag({\bf T}_1, {\bf T}_2)$, with ${\bf T}_1$ $r \times r$; and
    \item[ii)] ${\bf T}_1^H \T_1 = \I_r$.
\end{itemize}
We prove each of these claims separately:

{i) \bf $\T$ is block-diagonal with $r \times r$ and $(M-r) \times (M-r)$ blocks.} \\
Assume that the columns of $\V_{F_1}$ and $\V_{G_1}^*$ are linearly independent, so the subspace spanned by the $M \times 2r$ matrix $\A = \left [\V_{F_1} , \V_{G_1}^* \right ]$ is $2r$-dimensional. The compact SVD is $\A = \U \Sigmab \V^H$, where $\U \in \mathbb{S}_t(2r,\mathbb{C}^M)$ is an $M \times 2r$ basis for the columns of $\A$, $\Sigmab = \diag(\sigma_1,\ldots,\sigma_{2r})$, and $\V$ is $2r \times 2r$ unitary. Let us define $\U_1 = \U_{1:r}$ and $\U_2 = \U_{r+1:2r}$, which group the first $r$ and last $r$ columns of $\U$, respectively, and form the matrices ${\bf P}_1 = \U_1 \U_1^T$ and ${\bf P}_2 = \U_2 \U_2^T$. With these definitions, it can be seen that $\Thetab = {\bf P}_1-{\bf P}_2 = \mathbf{U} \D \mathbf{U}^T$ where $\mathbf{D} = \operatorname{blkdiag}(\mathbf{I}_r, -\mathbf{I}_r)$.

To prove that $\T$ has the required block-structured pattern, we need to prove that, for any vector $\g^{\perp}$ such that $\V_{G1}^H \g^{\perp} = \0_{r \times 1}$, then $\V_{F_1}^H ({\bf P}_1 - {\bf P}_2) \g^\perp = \0_{r \times 1}$. Similarly, for any vector $\f^{\perp}$ such that $\V_{F_1}^H \f^{\perp} = \0_{r \times 1}$, then $(\f^{\perp})^H ({\bf P}_1 - {\bf P}_2) \V_{G_1} = \0_{1 \times r}$. The proof only considers the claim $\V_{F_1}^H ({\bf P}_1 - {\bf P}_2) \g^\perp = \0_{r \times 1}$, since the claim $(\f^{\perp})^H ({\bf P}_1 - {\bf P}_2) \V_{G_1} = \0_{1 \times r}$ can be proven similarly.

Define the coordinate matrices $\mathbf{C}_f = \mathbf{U}^H \mathbf{V}_{F_1}$ and $\mathbf{C}_g = \mathbf{U}^T \mathbf{V}_{G_1}$ \(\), both \(2r \times r\). Then $\mathbf{V}_{F_1}^H (\mathbf{P}_1 - \mathbf{P}_2) \g^\perp = \mathbf{C}_f^H \mathbf{D} \, \mathbf{c}_{g}^{\perp}$, where $\mathbf{c}_{g}^{\perp} = \mathbf{U}^T \g^\perp$. The orthogonality condition $\V_{G_1}^H {\g}^{\perp} = \mathbf{0}$ implies $\mathbf{C}_g^H \mathbf{c}_{g}^{\perp} = \mathbf{0}$, and the goal is to prove that $ \mathbf{C}_{f}^H \mathbf{D} \, \mathbf{c}_{g}^{\perp} = \0$.

The Gram matrix, $\mathbf{A}^H \mathbf{A}$, can be factored as
\begin{equation}
\mathbf{A}^H \mathbf{A} = \V \Sigmab^2 \V^H = \begin{bmatrix} \mathbf{I}_r & \boldsymbol{\Gamma} \\ \boldsymbol{\Gamma}^H & \mathbf{I}_r \end{bmatrix},
\label{eq:Gramm}
\end{equation}
where $\boldsymbol{\Gamma} = \mathbf{V}_{F_1}^H \mathbf{V}_{G_1}^*$ is $r\times r$. Let $\boldsymbol{\Gamma} = \mathbf{P} \boldsymbol{\Delta} \mathbf{R}^H$ be the SVD of $\boldsymbol{\Gamma}$, with $\boldsymbol{\Delta} = \operatorname{diag}(\cos(\theta_1), \dots, \cos(\theta_r))$ and $\mathbf{P}, \mathbf{R}$ unitary $r \times r$. The angles $0 \leq \theta_k \leq \pi/2$, $k=1,\ldots,r$ are the principal angles between the subspaces spanned by $\V_{F_1}$ and $\V_{G_1}^*$ \cite[Chapter 9]{Coherence}. The singular values of the Gram matrix $\mathbf{A}^H \mathbf{A}$ in \eqref{eq:Gramm} are $1 \pm \cos(\theta_k)$ for $k=1,\dots,r$. The columns of the unitary matrix $\V$ in \eqref{eq:Gramm} are 
\[
\mathbf{v}_k = \frac{1}{\sqrt{2}} \begin{bmatrix} \mathbf{p}_k \\ \mathbf{r}_k \end{bmatrix},
\]
corresponding to singular values $1 + \cos(\theta_k)$ for $k=1,\ldots,r$, and 
\[
\mathbf{v}_k = \frac{1}{\sqrt{2}} \begin{bmatrix} \mathbf{p}_k \\ -\mathbf{r}_k \end{bmatrix},
\]
corresponding to singular values $1 - \cos(\theta_k)$ for $k=r+1,\ldots, 2r$, where $(\mathbf{p}_k, \mathbf{r}_k)$ are the $k$-th columns of $\mathbf{P}, \mathbf{Q}$. Thus,
\[
\mathbf{V} = \frac{1}{\sqrt{2}} \begin{pmatrix} \mathbf{P} & \mathbf{P} \\ \mathbf{R} & -\mathbf{R} \end{pmatrix}.
\]
With the previous factorization, the matrix with the singular values of $\A$ can be partitioned as $\Sigmab = \operatorname{blkdiag}(\boldsymbol{\Sigma}^+, \boldsymbol{\Sigma}^-)$, where $\boldsymbol{\Sigma}^+ = \operatorname{diag}(\sqrt{1+\cos(\theta_1)}, \dots, \sqrt{1+\cos(\theta_r)})$ and $\boldsymbol{\Sigma}^- = \operatorname{diag}(\sqrt{1-\cos(\theta_1)}, \dots, \sqrt{1-\cos(\theta_r)})$, and $\A^H \A = \V \Sigmab^2 \V^H$.
\begin{figure*}
\setcounter{equation}{23}
\begin{multline}
    \left( \V_{F_1}\V_{F_1}^H + \V_{G_1}^* \V_{G_1}^T\right) \underbrace{ \frac{1}{\sqrt{2 (1 + \cos(\theta_k))}} \left(\V_{F_1} {\bf p}_k + \V_{G_1}^* {\bf r}_k \right)}_{\u_{1,k}} = \\
    \frac{1}{\sqrt{2 (1 + \cos(\theta_k))}} \left( \V_{F_1} {\bf p}_k + \V_{F_1} (\mathbf{P} \boldsymbol{\Delta} \mathbf{R}^H){\bf r}_k + \V_{G_1}^* (\mathbf{R}\boldsymbol{\Delta} \mathbf{P}^H) {\bf p}_k + \V_{G_1}^* {\bf r}_k \right) = \\
    \frac{1}{\sqrt{2 (1 + \cos(\theta_k))}} \left( \V_{F_1} {\bf p}_k + \cos(\theta_k) \V_{F_1} {\bf p}_k + \cos(\theta_k)\V_{G_1}^*{\bf r}_k + \V_{G_1}^* {\bf r}_k \right) = 
    \sqrt{\frac{1 + \cos(\theta_k)}{2}} \left(\V_{F_1} {\bf p}_k + \V_{G_1}^* {\bf r}_k \right) = \sigma_k^2 \u_{1,k},  
    \label{eq-24}
\end{multline}
    \hrulefill 
\end{figure*}
\setcounter{equation}{21}
Le us define the selection matrices $\mathbf{E}_f = \begin{pmatrix} \mathbf{I}_r \\ \mathbf{0}_r \end{pmatrix}$ and $\mathbf{E}_g = \begin{pmatrix} \mathbf{0}_r \\ \mathbf{I}_r \end{pmatrix}$, both \(2r \times r\). Notice that $\V_{F_1} = \A \mathbf{E}_f = \U \Sigmab \V^H \mathbf{E}_f$ and $\V_{G_1}^* = \A \mathbf{E}_g = \U \Sigmab \V^H \mathbf{E}_g$. Then, 
\begin{align*}
 \mathbf{C}_f &= \boldsymbol{\Sigma} \mathbf{V}^H \mathbf{E}_f = \frac{1}{\sqrt{2}} \begin{bmatrix} \boldsymbol{\Sigma}^+ \mathbf{P}^H \\ \boldsymbol{\Sigma}^- \mathbf{P}^H \end{bmatrix},\quad  {\rm and} 
 \\
 \mathbf{C}_g &= \boldsymbol{\Sigma} \mathbf{V}^T \mathbf{E}_g = \frac{1}{\sqrt{2}} \begin{bmatrix} \boldsymbol{\Sigma}^+ \mathbf{R}^T \\ -\boldsymbol{\Sigma}^- \mathbf{R}^T \end{bmatrix}.  
\end{align*}
Remembering that ${\bf P}$ and $\R$ are unitary matrices, from the previous definitions it follows that
\begin{equation}
    \D \C_f = \C_g \R^* {\bf P}^H.
    \label{eq:relationCfCg}
\end{equation}
Finally, we have
\begin{align*}
 \mathbf{C}_f^H \mathbf{D} \, \mathbf{c}_{g}^{\perp} &= \left( \D \C_f \right)^H  \mathbf{c}_{g}^{\perp} \overset{(a)}{= } \left( \C_g ({\bf R}^* {\bf P}^H) \right)^H  \mathbf{c}_{g}^{\perp}\\
 &= ( {\bf P}\R^T) \C_g^H \mathbf{c}_{g}^{\perp} = ({\bf P}\R^T)  \0 = \0,
\end{align*}
where in $(a)$ we have used \eqref{eq:relationCfCg}. Therefore, the SVD basis aligns the coordinates such that \(\mathbf{D} \mathbf{C}_f = \mathbf{C}_g \mathbf{W}\), with $\mathbf{W} = {\bf R}^*{\bf P}^H$ ensuring that $\mathbf{C}_f^H \mathbf{D} \, \mathbf{c}_{g}^{\perp}$ vanishes for any $\mathbf{c}_{g}^{\perp}$ orthogonal to $\operatorname{span}(\mathbf{V}_{G_1})$. This completes the proof of claim i). It remains to prove claim ii).

{ii) \bf $\T_1 = \V_{F_1}^H \Q \Q^T \V_{G_1}$ is unitary.} 

The singular values of $\boldsymbol{\Gamma} = \mathbf{V}_{F_1}^H \mathbf{V}_{G_1}^* =\mathbf{P} \diag(\cos(\theta_1), \ldots, \cos(\theta_r)) \mathbf{R}^H$ are the cosines of the principal angles between the subspaces $\V_{F_1}$ and $\V_{G_1}^*$, while $\R$ and ${\bf P}$ are rotation matrices that allow us to decouple the problem in an interesting way. More specifically, let us consider the $k$-th vector of $\U_1$, which corresponds to the singular value $\sigma_k = \sqrt{1 + \cos(\theta_k)}$ of $\A$. It can be expressed as
\begin{equation}
    \u_{1,k} = \frac{1}{\sqrt{2 (1 + \cos(\theta_k))}} \left(\V_{F_1} {\bf p}_k + \V_{G_1}^* {\bf r}_k \right),
    \label{eq:u1k}
\end{equation}
where ${\bf p}_k$ and ${\bf r}_k$ are, respectively, the $k$-th column of ${\bf P}$ and ${\bf R}$. Now, we check that $ \u_{1,k}$ is an eigenvector of $\A \A^H = \V_{F_1}\V_{F_1}^H + \V_{G_1}^* \V_{G_1}^T$ with eigenvalue $\sigma_k^2$, yielding \eqref{eq-24} on the top of next page, where we have exploited the orthogonality of ${\bf P}$ and ${\bf R}$. The fact that $\| \u_{1,k} \| =1$ can easily be checked. Similarly, the $k$-th vector of $\U_2$, which corresponds to the singular value $\sigma_k = \sqrt{1 - \cos(\theta_k)}$, can be expressed as
\setcounter{equation}{24}
\begin{equation}
    \u_{2,k} = \frac{1}{\sqrt{2 (1 - \cos(\theta_k))}}  \left(\V_{F_1} {\bf p}_k - \V_{G_1}^* {\bf r}_k \right).
    \label{eq:u2k}
\end{equation}
Using this decomposition, we observe that 
\begin{align*}
    \V_{F_1}^H \u_{1,k} &= \frac{1}{\sqrt{2 (1 + \cos(\theta_k))}}   \V_{F_1}^H\left(\V_{F_1} {\bf p}_k + \V_{G_1}^* {\bf r}_k \right) \\
    &= \frac{1}{\sqrt{2 (1 + \cos(\theta_k))}} ({\bf p}_k +  \cos(\theta_k) {\bf p}_k ) \\
    &=\sqrt{\frac{(1 + \cos(\theta_k))}{2}} {\bf p}_k = \cos(\theta_k/2) {\bf p}_k,
\end{align*}
and
\begin{align*}
    \u_{1,k}^T \V_{G_1}  &= \frac{1}{\sqrt{2 (1 + \cos(\theta_k))}} \left(\V_{F_1} {\bf p}_k + \V_{G_1}^* {\bf r}_k \right)^T  \V_{G_1} \\
    &= \frac{1}{\sqrt{2 (1 + \cos(\theta_k))}} ({\bf r}_k^T +  \cos(\theta_k){\bf r}_k^T ) \\
    &=\sqrt{\frac{(1 + \cos(\theta_k))}{2}} {\bf r}_k^T = \cos(\theta_k/2) {\bf r}_k^T,
\end{align*}
so the contribution from each $\V_{F_1}^H\u_{1,k} \u_{1,k}^T \V_{G_1}$ is $\cos(\theta_k/2)^2 {\bf p}_k {\bf r}_k^T$. Similarly, we have
\begin{align*}
    \V_{F_1}^H \u_{2,k} &= \sqrt{\frac{(1 - \cos(\theta_k))}{2}} {\bf p}_k = \sin(\theta_k/2) {\bf p}_k, \\
   \u_{2,k}^T \V_{G_1} &= - \sqrt{\frac{(1 - \cos(\theta_k))}{2}} {\bf r}_k^T = - \sin(\theta_k/2) {\bf r}_k^T
\end{align*}
and hence the contribution from each  $\V_{F_1}^H\u_{2,k} \u_{2,k}^T\V_{G_1}$ is $-\sin(\theta_k/2)^2 {\bf p}_k {\bf r}_k^T$. We remind the reader that the principal angles $\theta_k$ are defined in the interval $0\leq \theta_k \leq \pi/2$, therefore $\cos(\theta_k/2)$ and $\sin(\theta_k/2)$ are positive values. All in all, we have that
\begin{align*}
    \T_1 &= \V_{F_1}^H \Thetab \V_{G_1} = \V_{F_1}^H (\U_1\U_1^T - \U_2 \U_2^T) \V_{G_1}  \\
    &= \sum_{k=1}^r \left( \cos(\theta_k/2)^2 {\bf p}_k {\bf r}_k^T + \sin(\theta_k/2)^2 {\bf p}_k {\bf r}_k^T\right)
    \\&= \sum_{k=1}^r {\bf p}_k {\bf r}_k^T = {\bf P}{\bf \R}^T.
\end{align*}
Finally $\T_1^H \T_1 = {\bf \R}^*{\bf P}^H {\bf P} {\bf R}^T = {\bf \R}^* {\bf \R}^T = \I_r $, thus proving the result.


\addcontentsline{toc}{section}{Appendices}
\renewcommand{\thesubsection}{\Alph{subsection}}
\bibliographystyle{IEEEbib}
\balance
\bibliography{refs}
\end{document}